\newcommand{\argmin}{\operatornamewithlimits{argmin}}
\begin{document}

%
%
%
\mainmatter              
\title{Parallel Query in the Suffix Tree}
\titlerunning{Parallel Query in the Suffix Tree}  
%
\author{Matevž Jekovec\inst{1} \and Andrej Brodnik\inst{1,2}}
\authorrunning{Matevž Jekovec \and Andrej Brodnik} 
%
%
\institute{$^1$University of Ljubljana,\\ Faculty of Computer and Information Science, SI\\
\email{\{matevz.jekovec,andrej.brodnik\}@fri.uni-lj.si},\\ WWW home page:
\texttt{http://lusy.fri.uni-lj.si/} \\ ~ \\
$^2$University of Primorska, \\Faculty of Mathematics, Natural Sciences and Information Technologies, SI}

\maketitle              

\begin{abstract}
Given a query string of length $m$, we explore a parallel pattern matching in a static suffix tree based data structure for $p \ll n$, where $p$ is the number of processors and $n$ is the length of the text. We present three results on CREW PRAM. The parallel query in the suffix trie requires $O(m + p)$ work, $O(m/p + \lg p)$ time and $O(n^2)$ space in the worst case. We extend the same technique to the suffix tree where we show it is inherently sequential in the worst case. Finallywe perform an interleaved parallel query which spends $O(m \lg p)$ work, $O(\frac{m}{p} \lg p)$ time and $O(n \lg p)$ space in the worst case.
\keywords{parallel pattern matching, suffix tree, PRAM}
\end{abstract}
\section{Introduction}

Pattern matching is one of the basic operations in text handling applications. We want to count occurrences of the given pattern in a text, locate them and retrieve the document content at specified position. Essentially there are two types of algorithms for pattern matching. The first type that constructs a finite automaton based on the pattern and process the text using this automaton \cite{Knuth1977,Boyer1977}. The least amount of work for finding all the occurrences of the pattern is bounded by the text size $n$. The second type first constructs a text index (e.g. the suffix array \cite{Manber1990}, the suffix tree \cite{McCreight1976}, or one of the compressed indexes, such as the FM index \cite{Navarro2007}). The queries are then answered by examining the index only (i.e. self-indexes), or a combination of the index and a subset of the original text. Required work of these algorithms is bounded by the pattern size $m$. In this paper we will focus on the suffix tree data structure, the fundamental text indexes and supports the pattern matching operations in $O(m)$ time.

Parallelism in text handling applications seems obvious since in practice we need to process a large amount of data. Vishkin \cite{Vishkin1985} designed the first optimal parallel algorithm of the first kind for locating occurrences of the pattern, which requires $O(n/p)$ time on CRCW PRAM for $p \leq n/\log m$. Later, work-optimal and work-time-optimal parallel string algorithms were introduced on EREW PRAM by Czumaj, Galil and others \cite{Czumaj1995}. Research on parallel algorithms of the second kind mostly focused on efficient construction of the text index. Farach-Colton et al.\ \cite{Farach-Colton2000} provided theoretical ground for optimal parallel suffix tree construction in the parallel disk-access model of computation. Latest practical algorithms for suffix tree construction with different worst case theoretical bounds include, for example, ERA \cite{Mansour2011,Jekovec2013} and Parallel Cartesian Tree \cite{Shun2014}. Researchers have also worked on both practically and theoretically sound algorithms for the suffix array and longest common prefix array construction \cite{Bingmann2013-2,Kaerkkaeinen2015}. In terms of the query speed research mostly focused on reducing cache misses when navigating the tree. Clark and Munro \cite{Clark1996} worked on succinct cache-efficient suffix trees. Ferragina and Grossi introduced String B-trees \cite{Ferragina1999} requiring optimal $O(\log_B n)$ cache misses in the worst case for any query, where $B$ is the block size in the external memory model. Brodal \cite{Brodal2006} designed a cache-oblivious variant of a data structure requiring the same optimal cache complexity. Notice the parallel trie navigation presented in this paper could also use a cache-oblivious organization of the trie in the backend. Demaine et al.\ \cite{Demaine2004} showed that arbitrary trie cannot be laid out to a memory and incur $\Omega(\log_B n)$ cache misses on queries in the worst case without some redundancy: Ferragina and Grossi propagated every $O(B)$-th value to the upper levels while Brodal used multiple layers of giraffe trees and bridges on top of the original trie.

In this paper we present a parallel query technique on a suffix trie and extend it to the suffix tree. Then, we explore a completely different approach to the query which employs interleaving the subqueries. Throughout the paper we will use CREW PRAM model with $p < 2m$ processors. Without loss of generality we assume $p=2^x$ for some integer $x$. The space complexity is expressed in words, if not stated otherwise. Our contributions are:
\begin{enumerate}
	\item A highly scalable parallel query algorithm in a suffix trie requiring $O(m + p)$ work, $O(m/p + \lg p)$ time and $O(n^2)$ space in the worst case.
	\item A proof that the parallel query approach used in a suffix trie achieves an inherently sequential execution time in the suffix tree.
	\item A parallel query algorithm in a layered interleaved suffix tree requiring $O(m \lg p)$ work, $O(\frac{m}{p} \lg p)$ time and $O(n \lg p)$ space in the worst case.
\end{enumerate}

To the best of our knowledge, this is the first result on the parallel queries in suffix tree based data structures employing $p \ll n$.

\section{Notation and Preliminaries}

We denote by $T$ the input text consisting of $n$ characters from a finite alphabet $\Sigma$. Further, we denote by $Q$ the query string of size $m$. We enumerate the elements of a list, an array and a string starting at $1$. By $p$ we denote the number of processors.

$X[i]$ denotes $i^{th}$ character of a string $X$. While $X[x_1, x_2]$ denotes a substring of $X$ ranging from the position $i_1$ to, including, the position $i_2$. If $i_1>i_2$, the resulting substring is empty. By $X[i, \cdot]$ we denote the substring $X[i,|X|]$, that is the suffix of $X$ starting at position $i$. If $X$ and $Y$ are strings, then $XY$ is their concatenation.

\subsection{Trie and Patricia trie}

\label{sec:trie-and-patricia-trie}

\emph{Trie} \cite{DeLaBriandais1959,Fredkin1960} is an ordered tree data structure used to store string-based keys. Every edge corresponds to one character. Each node in a trie represents a string of characters corresponding to the path from the root to this node. Trie can be used as a set, or as a dictionary, if we extend nodes to contain some value. The space complexity of a trie is $O(n \cdot l)$, where $n$ is the number of keys and $l$ the length of the longest key. By $\operatorname{child}(\omega, c)$ we denote a child of a node $\omega$ by taking an outgoing edge labeled by a character $c$, and by $\operatorname{parent}(\omega)$ we denote $\omega$'s parent. Further $\pi(\omega)$ denotes a sequence of nodes on a path from the root to the node $\omega$. Similarly $\pi(\omega_1, \omega_2)$ denotes a sequence of nodes on a path from node $\omega_1$ to node $\omega_2$ in a subtree of $\omega_1$.


\emph{Patricia tree} \cite{Morrison1968} is a path compressed trie, where a chain of nodes with a single child is merged into the final node of the chain. This node is either a leaf, or a node with more than one child. Each edge now corresponds to, instead of one character to a string of them. If we store two keys where the first key is a prefix of the second one, we need to add a unique \emph{delimiter character} or a unique sequence of characters at the end of the first key in order to discriminate between the two keys. Each node $\omega$ stores the first (discriminator) character and the length of the incoming edge's substring $\operatorname{skipvalue}(\omega) \leq n$. We define a \emph{cumulative skip value}, $|\omega|$, for node $\omega$ as the sum of all skip values from the root to $\omega$, including $\omega$ itself. The cumulative skip value is used during the query to determine which character of the query string needs to be compared to the character stored at the current edge in a patricia tree. The query is finished when we reach the first node with the cumulative skip value greater than the query length, or a leaf. Obviously, if any $\operatorname{skipvalue} > 1$, we must compare the query string to any of the suffixes stored in the resulting subtree to ensure the existence of the pattern in the text.

Assuming the keys are of constant size, patricia tree takes $O(n)$ space. Both trie and patricia tree have $\Theta(m)$ sequential query time.

\subsection{Suffix trie and Suffix tree}

\emph{Suffix trie} is a trie-based dictionary storing each suffix of the input text $T$, where each leaf $\nu$ stores $\operatorname{ref}(\nu)$ and represents a suffix $T[\operatorname{ref}(\nu), \cdot]$. The root node represents an empty string. We say a node in a suffix trie \emph{corresponds} to exactly one substring in the text occurring one or more times and consisting of characters on a path from a root to that node. 
Since suffix $X_i=T[i,\cdot]$ is $n-i$ characters long, the space complexity of the suffix trie is $\sum\limits_{i=1}^{n} |X_i| \in O(n^2)$.

\emph{Suffix tree} is a path compressed version of the suffix trie. Let $\omega$ denote a node in the suffix tree and let $\nu$ be the left-most leaf of the subtree rooted at $\omega$. We say $\omega$ \emph{corresponds} to substrings $A_i=T[ref(\nu), ref(\nu)+i]$ for all $i \in \lbrace | \operatorname{parent}(\omega)|+1 \ldots |\omega| \rbrace$, where the frequency of each substring in the text is equal. Obviously, if $\omega$ is a leaf, then $\omega$ corresponds to a suffix $A=T[\operatorname{ref}(\omega), \cdot]$. As usual, instead of the text itself, we can store constant size text references into the nodes of the suffix tree, so the space complexity of the data structure is $O(n)$.

In the suffix tree we define a \emph{suffix link} from node $\alpha'$ to node $\alpha$ iff their longest corresponding substrings are in relation $A=A'[2,\cdot]$. Since we inserted all the suffixes of the text, there exists an outgoing suffix link from each node in the suffix tree, except for the root node. Analogously, each node, except for the deepest leaf corresponding to $T$, contains an incoming suffix link.

\subsection{Perfect Hash Dictionary}

A \emph{perfect hash dictionary} is a hash table which requires $O(d)$ space for storing $d$ elements and uses $O(1)$ time in the worst case to answer, whether an element is in a dictionary. A static two-level perfect hash table has been introduced in \cite{Fredman1984}.

\section{Parallel query in a Suffix Trie}
\label{chap:parallel-query-in-sufix-trie}

We begin by introducing the fundamental suffix trie property used during the parallel query method.

\begin{lemma}
	\label{lemma:node_concatenation}
	Let $Q$ be a substring of $T$ and let the node $\omega$ in $T$'s suffix trie correspond to $Q$. Then, there exists a halving pair of nodes $(\alpha_1, \alpha_2)$ corresponding to the substrings $A_1=Q[1,\lceil|Q|/2\rceil]$ and $A_2=Q[\lceil|Q|/2\rceil+1, \cdot]$ respectively, where $Q=A_1A_2$.
\end{lemma}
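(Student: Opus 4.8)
The plan is to prove this by a direct structural argument about the suffix trie, exploiting the fact that every node of the suffix trie is in one-to-one correspondence with a string spelled along the root-to-node path, and that in a suffix trie (as opposed to the path-compressed suffix tree) every prefix of such a string is itself realised by a node.

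First I would set $k = \lceil |Q|/2 \rceil$ and split $Q = A_1 A_2$ with $A_1 = Q[1,k]$ and $A_2 = Q[k+1,\cdot]$, so that $|A_1| = k$ and $|A_2| = |Q| - k = \lfloor |Q|/2 \rfloor$. Since $Q$ is a substring of $T$, its prefix $A_1$ is also a substring of $T$; because the suffix trie stores every suffix of $T$ character by character, walking from the root along the characters of $A_1$ lands on a well-defined node, which I call $\alpha_1$, and by definition $\alpha_1$ corresponds to $A_1$. This uses only the defining property of the suffix trie recalled in the ``Suffix trie and Suffix tree'' subsection: a node corresponds to exactly one substring occurring one or more times, consisting of the characters on the root-to-node path. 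Note $\omega$ itself lies on this same root path at depth $|Q|$, and $\alpha_1 = \pi(\omega)$ restricted to depth $k$, so $\alpha_1$ is an ancestor of $\omega$ in the trie.

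Next I would exhibit $\alpha_2$. Since $Q$ occurs in $T$, say at position $j$ with $T[j,j+|Q|-1] = Q$, the string $A_2 = Q[k+1,\cdot] = T[j+k, j+|Q|-1]$ is a substring of $T$ as well, so again there is a node reached by walking from the root along $A_2$; call it $\alpha_2$, which corresponds to $A_2$. The concatenation identity $Q = A_1 A_2$ holds by construction of the split point $k$. It remains to articulate in what sense $(\alpha_1,\alpha_2)$ is a \emph{halving pair}: $\alpha_1$ is the ancestor of $\omega$ at trie-depth $\lceil |Q|/2 \rceil$, and $\alpha_2$ is the node at trie-depth $\lfloor |Q|/2 \rfloor$ whose path label is exactly the remaining suffix of $Q$; both depths are within one of $|Q|/2$, which is the ``halving'' claim. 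I would state this depth bookkeeping explicitly, since the term ``halving pair'' is presumably being introduced here and the lemma's content is essentially that such a pair always exists for any trie node.

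The main obstacle is not a deep one but a definitional one: making sure the notion ``corresponds to'' is applied correctly, i.e.\ that the trie node reached by spelling out a substring $S$ of $T$ genuinely corresponds to $S$ and not to some shorter or longer string. In the suffix trie this is immediate because there is no path compression — every edge carries a single character and every substring of $T$ that occurs in $T$ has its own node — so the only care needed is to observe that $A_1$ and $A_2$, being substrings of the substring $Q$ of $T$, are themselves substrings of $T$ and hence each determines a node. (This is precisely the step that will fail for the suffix tree, foreshadowing the second contribution of the paper; but for the trie the argument goes through verbatim.) Everything else is the arithmetic of $k = \lceil |Q|/2\rceil$, which I would not belabour.
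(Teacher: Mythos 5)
Your argument is correct and is essentially the paper's own proof: since $Q$ is a substring of $T$, so are $A_1$ and $A_2$, and because the suffix trie contains a node for every substring of $T$ (no path compression), the nodes $\alpha_1$ and $\alpha_2$ exist and correspond to them. The extra bookkeeping about depths and ancestry is fine but not needed beyond what the paper states.
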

\begin{proof}
	Since $Q$ is a substring of $T$, so are $A_1$ and $A_2$. Suffix trie contains all the substrings of the text, so $\alpha_1$ and $\alpha_2$ exist in the suffix trie (see Figure \ref{fig:parallel-suffix-trie}). \qed
\end{proof}

\begin{figure}[htb]
	\begin{center}
		\leavevmode
		\includegraphics[width=7cm]{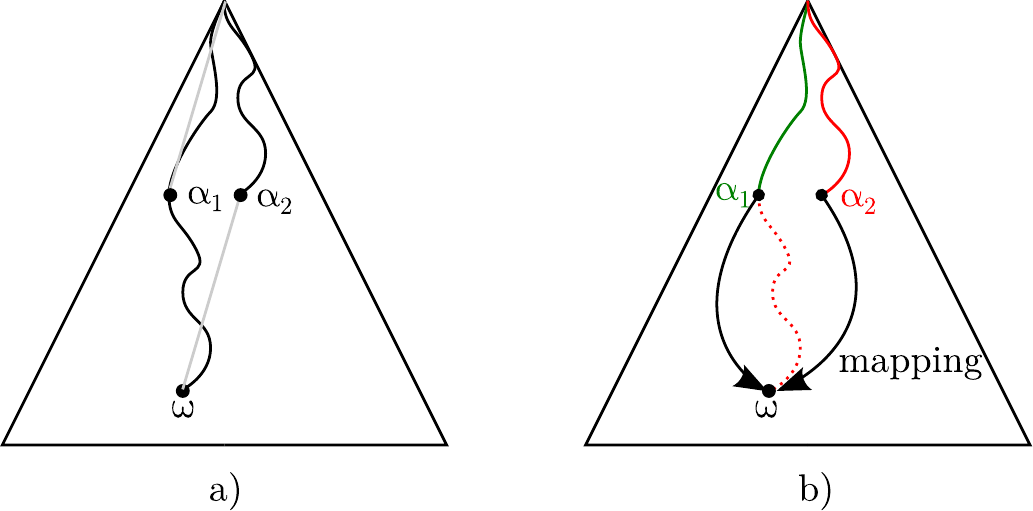}
	\end{center}
	\caption{The suffix trie. a) $\alpha_1$ and $\alpha_2$ are the halving pair of a substring which $\omega$ corresponds to. Gray lines mark the equivalence of the substrings obtained by $\pi(\alpha_2)$ and $\pi(\alpha_1,\omega)$. b) The mapping procedure of nodes $(\alpha_1, \alpha_2) \rightarrow \omega$. Notice $\pi(\alpha_1,\omega)$ is never navigated.}
	\label{fig:parallel-suffix-trie}
\end{figure}

The lemma provides a theoretical background for correct and efficient concatenation of two substrings of the text represented as nodes in the suffix trie. A consequence of Lemma \ref{lemma:node_concatenation} is that for each node $\omega$ in the suffix trie, there exists exactly one halving pair $\alpha_1, \alpha_2$ such that $|\alpha_1|=|\alpha_2|$ for even $|\omega|$, and $|\alpha_1|=|\alpha_2|+1$ for odd $|\omega|$. After the suffix trie is constructed, we create a dictionary, which maps a halving pair $(\alpha_1,\alpha_2) \rightarrow \omega$ for every $\omega$ in the suffix trie.
Notice the order of nodes $\alpha_1$ and $\alpha_2$ is important and reflects the order of the concatenated substrings.

\begin{algorithm}[htb]
	\SetKw{Continue}{continue}
	\KwIn{query string $Q$, suffix trie $\tau$, number of processors $p$}
	
	$Q_1 \ldots Q_p \leftarrow \operatorname{assign_p}(Q,p)$
	
	\For{$i \in \lbrace 1 \ldots p \rbrace$}{
		$\alpha_i \leftarrow \text{root node of }\tau$
	}
	
	\ForPar{$i \in \lbrace 1 \ldots p \rbrace$}{
		\While{$|\alpha_i| \leq |Q_i|$}{
			$\alpha_i \leftarrow \operatorname{child}(\alpha_i, Q_i[|\alpha_i|])$
		}
	}
	
	\For{$j \in [1, 2, 4 \ldots p/2]$}{
		\ForPar{$i=\lbrace 0 \ldots p/j-1 \rbrace$}{
			$\alpha_i \leftarrow \operatorname{probe}((\alpha_{ij+1}, \alpha_{ij+j+1}))$
		}
	}
	\Return{$\alpha_1$}
	\caption{Parallel query in the suffix trie employing $p$ processors.}
	\label{alg:parallel-suffix-trie-query}
\end{algorithm}

\begin{algorithm}[htb]
	\KwIn{query string $Q$, number of processors $p$}
	
	\ForPar{$i \in \lbrace 1 \ldots p \rbrace$}{
		$j_i \leftarrow \texttt{ReverseBits}(i-1)$ \tcp{MSB <-> LSB}
		$len_i \leftarrow |Q|/p + \lfloor j_i / (m \bmod p) \rfloor$ \tcp{$len_i$ equals $\lfloor |Q|/p \rfloor $ or $\lceil |Q|/p \rceil$}
	}
	
	$pos_i \leftarrow \texttt{ParPrefixSum}(len_i, p)$
	
	$Q_{1} \leftarrow Q[1, len_i]$
	
	\ForPar{$i \in \lbrace 2 \ldots p \rbrace$}{
		$Q_{i} \leftarrow Q[ pos_{i-1}, pos_{i-1}+len_i ]$
	}	
	
	\Return{$Q_1 \ldots Q_p$}
	\caption{Implementation of $\operatorname{assign_p}$.}
	\label{alg:parallel-suffix-trie-pAssign}
\end{algorithm}

The parallel query in the suffix trie is provided by Algorithm \ref{alg:parallel-suffix-trie-query}. To answer the query $Q$ we split it among $p$ processors (see Algorithm \ref{alg:parallel-suffix-trie-pAssign}). Each processor performs the query operation in the original, sequential suffix trie, with the assigned subquery. The processor independently navigates the tree. If any processor fails to take the corresponding branch during the search, the query does not appear in the text and we return an empty set of results. Otherwise, all processors successfully found nodes corresponding to their respective subqueries in time $O(m/p)$.



Let $\alpha_i$ be the resulting node of processor $i$ and $\omega$ the node corresponding to the query string $Q$. Our goal is to find a node $\omega$ from intermediate nodes $\alpha_i$ for $i=\{1 \ldots p\}$. We concatenate the substrings each $\alpha_i$ corresponds to pairwise: $\alpha_i^{(2)}=(\alpha_i, \alpha_{i+1})$ for every odd $i$. To obtain the whole path $\omega$, we continue the concatenation recursively $\alpha_i^{(3)}=(\alpha_i^{(2)}, \alpha_{i+2}^{(2)})$ for all $i \bmod 4=1$, in general $\alpha_i^{(j+1)} = (\alpha_i^{(j)}, \alpha_{i+2^j}^{(j)})$ for all $i \bmod {2^{(j)}} = 1$. In each step $j$, concatenations are done in parallel. After $\lg p$ steps, we reconstructed the whole path and obtained $\omega$.


\paragraph{Time and Work Complexity.}

The subquery lengths can be determined in constant parallel time. The parallel suffix trie navigation requires at most $O( m/p )$ parallel steps and $O(m)$ work in total. Final concatenation of $p$ subqueries requires $O(p)$ work and can be done in deterministic constant time per concatenation by using the perfect hash table. Using $p/2$ processors we can map $p$ subqueries to $p/2$ resulting nodes in a single step. Calculating the final $\omega$ node corresponding to the original query is done in $\lg p$ steps. The whole query requires $O(m + p)$ work and $O(m/p + \lg p)$ time. Concurrent reading is only required, if two processors traverse the same path in the suffix trie or access the same key in the hash table.

\paragraph{Space Complexity.}

When calculating the subquery lengths, we simultaneously need to keep $p$ words. Navigating the suffix trie requires constant space per processor. The dictionary size is one record for every node in the suffix trie. Overall we use $O(n^2)$ words as does the original suffix trie, which gives us:

\begin{theorem}
	Parallel query in suffix trie requires $O(m/p + \lg p)$ time on CREW PRAM, $O(n^2)$ space and $O(m + p)$ work.
\end{theorem}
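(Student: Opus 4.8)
The plan is to prove the theorem by accounting separately for the three phases of Algorithm~\ref{alg:parallel-suffix-trie-query}: the subquery assignment, the parallel navigation of the suffix trie, and the recursive pairwise concatenation via the halving-pair dictionary. Most of the ingredients are already in place in the excerpt, so the proof is essentially a careful bookkeeping argument; I would organize it as three lemmas (or three paragraphs) whose bounds are then summed.

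First I would handle \emph{correctness}, since the theorem implicitly asserts the algorithm answers the query. Here I would invoke Lemma~\ref{lemma:node_concatenation} and its stated consequence: every node $\omega$ of the suffix trie has a \emph{unique} halving pair $(\alpha_1,\alpha_2)$, so the dictionary built in the preprocessing is well-defined and a $\operatorname{probe}$ on $(\alpha_1,\alpha_2)$ returns the correct parent node when one exists. By induction on the concatenation depth $j$, after step $j$ the node $\alpha_i^{(j)}$ corresponds to the concatenation $Q_i Q_{i+1}\cdots Q_{i+2^{j-1}-1}$; the base case is the output of the navigation phase, and the inductive step is exactly the recursive identity $\alpha_i^{(j+1)} = (\alpha_i^{(j)}, \alpha_{i+2^j}^{(j)})$ together with Lemma~\ref{lemma:node_concatenation}. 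After $\lg p$ steps $\alpha_1$ corresponds to $Q_1\cdots Q_p = Q$. If any navigation fails or any probe misses, $Q$ does not occur in $T$ and the empty result is correct. (I would note that $p=2^x$ makes the halving/doubling clean, as assumed in the introduction.)

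Next, \emph{time and work}. For the assignment phase (Algorithm~\ref{alg:parallel-suffix-trie-pAssign}): the bit-reversal and per-processor length computation are $O(1)$ each, for $O(p)$ work and $O(1)$ time; the \texttt{ParPrefixSum} of $p$ values is the standard CREW-PRAM prefix-sum at $O(p)$ work and $O(\lg p)$ time; slicing out each $Q_i$ is $O(1)$ per processor given the offsets. For navigation: each $Q_i$ has length $\lfloor m/p\rfloor$ or $\lceil m/p\rceil$, the \texttt{While} loop of processor $i$ runs $|\alpha_i|\le|Q_i|=O(m/p)$ iterations at $O(1)$ each, so navigation costs $O(m/p)$ time and $\sum_i |Q_i| = O(m)$ work. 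For concatenation: there are $\lg p$ rounds; round $j$ performs $p/2^j$ probes, each $O(1)$ worst case via the static perfect hash dictionary \cite{Fredman1984}, so total work is $\sum_{j=1}^{\lg p} p/2^j = O(p)$ and total time is $O(\lg p)$ (one round is even doable with $p/2$ processors in a single step). Summing the three phases gives $O(m+p)$ work and $O(m/p + \lg p)$ time, matching the statement. I would also remark that only concurrent reads arise — two processors may traverse a shared prefix path in the trie or read the same dictionary slot — and no writes conflict, which is why CREW suffices.

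Finally, \emph{space}. The assignment phase needs $O(p)$ auxiliary words for the length and offset arrays; navigation needs $O(1)$ extra words per processor, i.e.\ $O(p)$; the halving-pair dictionary stores one record per trie node, which by the earlier count is $O(n^2)$; the suffix trie itself is $O(n^2)$ words. Since $p < 2m \le 2n$, the $O(p)$ and $O(n)$ terms are absorbed, leaving $O(n^2)$ total. The one step I expect to require the most care is making the correctness induction airtight — specifically, being explicit that the unique-halving-pair property guarantees the dictionary key $(\alpha_i^{(j)}, \alpha_{i+2^j}^{(j)})$ is exactly the pair registered for the intended node $\omega$, so that a dictionary \emph{miss} is sound evidence of non-occurrence rather than an artifact of an unlucky split; everything else is routine summation.
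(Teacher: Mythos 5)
Your proposal is correct and follows essentially the same route as the paper: the paper's argument is exactly the phase-by-phase bookkeeping you give (low-order cost for the subquery assignment, $O(m/p)$ time and $O(m)$ work for the independent trie navigation, $\lg p$ rounds of $O(1)$-time perfect-hash probes totalling $O(p)$ work, and $O(n^2)$ words dominated by the trie plus one dictionary record per node), with CREW sufficing because only concurrent reads occur. The extras you add --- the correctness induction over the merge tree and charging $O(\lg p)$ rather than constant time to the prefix sum --- go beyond the paper's own terse proof but change nothing, since $O(\lg p)$ is absorbed into the stated time bound.
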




\section{Parallel query in Suffix Tree}

The space reduction from $O(n^2)$ nodes in the suffix trie to $O(n)$ nodes in the suffix tree comes from removing the nodes with a single child (path compression). Consequently, Lemma \ref{lemma:node_concatenation} does not hold anymore since the node $\alpha_1$ of the halving pair might only have a single child in the suffix trie and it will not exist in the suffix tree.

In this section we correctly solve issues originating in the path compression and present a parallel query algorithm in the suffix tree. We begin by redefining the halving pair we store in the node mapping dictionary. Then we define the parallel query for $p=2$. Finally, we evaluate the work and time, and we find an example where the presented technique is inherently sequential in the worst case. Throughout this section we provide pseudocode to better describe our approach. To improve readability of the code however, we do not use any parallel constructs, but discuss the parallel implementation afterwards.

\subsection{Preprocessing}

We slightly loosen the equal work manner by extending the left subquery $A_1$ from Lemma \ref{lemma:node_concatenation} until we find a corresponding node $\alpha_1$ which exists in the suffix tree. Since the left subquery was extended, the right subquery $A_2$ needs to be shortened from the left accordingly. We will provide more details on subquery extension and shortening in the next section. First, we prove the existence of the new halving pair.

\begin{lemma}
	\label{lemma:node_concatenation_suffix_tree}
	For every node $\omega$ corresponding to a substring $Q$ in the suffix tree, there exists a halving pair $\alpha_1, \alpha_2$ corresponding to the substrings $A_1=Q[1,\hat{a}]$ and $A_2=Q[\hat{a}+1, \cdot]$ respectively for the smallest $\hat{a} \geq \lceil |Q|/2 \rceil$ in the suffix tree, where $Q=A_1A_2$.
\end{lemma}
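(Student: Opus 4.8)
The plan is to mimic the proof of Lemma~\ref{lemma:node_concatenation}, but account for path compression by choosing the split point $\hat a$ adaptively rather than at exactly $\lceil |Q|/2\rceil$. First I would fix the node $\omega$ corresponding to $Q$ in the suffix tree and consider the "ideal" split position $a_0 = \lceil |Q|/2\rceil$. The substring $Q[1,a_0]$ is certainly a substring of $T$, so it corresponds to \emph{some} node in the suffix trie; the only obstruction is that this node may have a single child there and hence be absorbed into a compressed edge of the suffix tree. I would then let $\hat a$ be the smallest index $\hat a \geq a_0$ such that the substring $Q[1,\hat a]$ corresponds to an actual node $\alpha_1$ of the suffix tree (i.e.\ the endpoint of the first compressed edge reached at or past depth $a_0$ along the path from the root spelling $Q$).

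The key steps, in order, are: (i) show such an $\hat a$ exists and satisfies $\hat a \le |Q|$ — this holds because $\omega$ itself corresponds to $Q[1,|Q|]$ and $\omega$ is a genuine node of the suffix tree, so the set of valid indices in $[a_0,|Q|]$ is nonempty and we take its minimum; (ii) set $A_1 = Q[1,\hat a]$ and $A_2 = Q[\hat a+1,\cdot]$ so that $Q = A_1 A_2$ by construction, and $\hat a \ge \lceil|Q|/2\rceil$ as required; (iii) verify $\alpha_1$ exists as a node of the suffix tree, which is immediate from the choice of $\hat a$; (iv) verify $\alpha_2$, the node corresponding to $A_2$, exists in the suffix tree. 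For step (iv) I would invoke the suffix-link structure: since $A_2$ is obtained from $Q$ by deleting a prefix, iterating the suffix link from $\omega$ exactly $\hat a$ times (using that every non-root node of the suffix tree has an outgoing suffix link, as stated in the preliminaries) lands on a node whose longest corresponding string is a string with $A_2$ as a prefix; one then argues the node spelling exactly $A_2$ is itself a branching (or leaf) node. The cleanest route here is: $A_1$ being the string of a branching node $\alpha_1$ means $A_1$ is right-extensible in at least two ways inside $T$ (or $\alpha_1$ is a leaf); translating this via occurrences of $Q$ back to occurrences of $A_2 = A_1'[2,\cdot]$-type arguments shows $A_2$ is likewise branching, hence a node of the suffix tree.

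The main obstacle I anticipate is step (iv): making the argument for $\alpha_2$ fully rigorous. It is \emph{not} automatic that shortening a string that corresponds to a suffix-tree node yields another suffix-tree node — in general removing a prefix can create a string that sits in the middle of a compressed edge. The resolution must use the specific fact that $\hat a$ was chosen as the first branching depth at or past the midpoint, so that $A_1$ is right-branching; one then needs that $A_1$ being right-branching, together with $A_2$ being a suffix of $Q$ whose occurrences are governed by occurrences of $Q$, forces $A_2$ to be right-branching as well (equivalently, that $\alpha_2$ has $\ge 2$ children or is a leaf). If the authors' definition instead permits $\alpha_2$ to be an implicit node represented by an edge (which the subsequent "subquery shortening" discussion hints at), then step (iv) collapses to a triviality analogous to Lemma~\ref{lemma:node_concatenation}, and the only real content is the existence and bound on $\hat a$ from steps (i)--(ii). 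I would write the proof to make explicit which of these two readings is intended, since the whole difficulty of the lemma lives precisely there.
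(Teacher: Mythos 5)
Your construction of $\alpha_1$ is the same as the paper's: the paper walks from $\omega$ towards the root and stops at the shallowest explicit node of cumulative depth at least $\lceil|Q|/2\rceil$, which is exactly your ``first branching depth at or past the midpoint'' along the path spelling $Q$; steps (i)--(iii) match (with the same implicit assumption the paper also makes, namely that the fallback index $|Q|$ is itself an explicit depth, i.e.\ that $Q$ is the longest string of $\omega$). The genuine gap is step (iv), and the mechanism you propose for closing it is the wrong one. The right-branching of $\alpha_1$ is irrelevant to the existence of $\alpha_2$: knowing that $A_1$ is followed by two distinct characters in $T$ constrains only what can appear immediately after occurrences of $A_1$ (i.e.\ the first character of $A_2$), and says nothing about the right-extensions of $A_2$, which is what decides whether its locus survives path compression. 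The property the paper actually uses is that of $\omega$ itself: since $A_2$ is a suffix of $Q$, every occurrence of $Q$ ends with an occurrence of $A_2$ followed by the same character, so the set of right-extensions of $A_2$ contains that of $Q$. Hence if $\omega$ branches, the locus of $A_2$ has at least as many children and is an explicit node; if $\omega$ is a leaf, then $Q$ (hence $A_2$) is a suffix of the text terminated by the delimiter, and the locus of $A_2$ is a leaf. This is precisely the paper's one-sentence argument (``$\alpha_2$ remains a leaf, if $\omega$ was a leaf, or has at least as many children as $\omega$ had''), and it is the same closure property that makes iterated suffix links well defined --- so your suffix-link route can be made to work, but only by invoking the branching/leaf status of $\omega$, not of $\alpha_1$.

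You are right to flag the ambiguity when $Q$ is strictly shorter than the longest string corresponding to $\omega$: in that case $A_2$ is not a suffix of $\omega$'s full string, its locus may be implicit, and even the bound $\hat a\le|Q|$ can fail; the paper's proof glosses over this case as well, so noting it is fair. But as written your plan leaves the decisive step as ``one then argues,'' and the hint you give for completing it would not succeed; to finish, replace it with the suffix-inheritance argument above, with $\omega$ as the source of the branching.
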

\begin{proof}
	We start by assigning $\alpha_1=\omega$ and moving from $\omega$ towards the root until we reach the node corresponding to the smallest $\hat{a} \geq \lceil |Q|/2 \rceil$. In the worst case, we did not leave the initial node $\omega$ and $\alpha_1=\omega$. Otherwise, $\alpha_1$ is one of $\omega$'s ancestors.

	If node $\omega$ exists in the suffix tree, then $\omega$ is the leaf of the suffix tree, or have at least two children. Since the suffix tree contains all the suffixes, including the suffixes of $Q$, then there exists a node $\alpha_2$ corresponding to the substring $A_2=Q[\hat{a}+1, \cdot]$. $\alpha_2$ remains a leaf, if $\omega$ was a leaf, or has at least as many children as $\omega$ had. \qed
\end{proof}

We showed the halving pair $\alpha_1, \alpha_2$ exists in the suffix tree for every node $\omega$. Next, we must decide which substring of the resulting node $\omega$ in the suffix tree will the halving pair split. Recall the suffix tree definition: $\omega$ corresponds to, instead of one as in the suffix trie, to $\operatorname{skipvalue}(\omega) \in O(n)$ substrings. If we stored every corresponding halving pair of the potential queries in our node mapping dictionary, we would spend $O(n)$ space for a single node and $O(n^2)$ space for all the nodes eliminating the storage savings introduced by the suffix tree. Instead, we can only store a constant number of halving pairs per node in the suffix tree. To decide which pair to store, we begin with a lemma which captures an important consequence of the path compression in the suffix tree.

\begin{lemma}
	\label{lemma:suffix-tree-deeper-paths-sparser-than-shallower-ones}
	Let $\alpha, \alpha'$ be nodes in the suffix tree and $A, A'$ their longest corresponding substrings, such that $A'=CA$ for some prefix $C$. Also, let $\gamma$ be a node in the suffix tree, such that its longest corresponding substring is $C$. Then, $|\pi(\gamma,\alpha')| \leq |\pi(\alpha)|$.
\end{lemma}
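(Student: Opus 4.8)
The plan is to construct an injective map from the nodes on the path $\pi(\gamma,\alpha')$ into the nodes on the path $\pi(\alpha)$; once such a map is available, the bound $|\pi(\gamma,\alpha')|\le|\pi(\alpha)|$ follows just by counting. First I would fix notation: write the nodes of $\pi(\gamma,\alpha')$ as $\gamma=\nu_0,\nu_1,\ldots,\nu_k=\alpha'$. Since $C$ is a prefix of $A'=CA$ and since $C$, $CA$ are the \emph{longest} corresponding substrings of $\gamma$, $\alpha'$ respectively, the node $\gamma$ is an ancestor of $\alpha'$, so this path is well defined and the longest corresponding substring of each $\nu_i$ has the form $Cw_i$, where $w_0=\epsilon,w_1,\ldots,w_k=A$ are prefixes of $A$ of strictly increasing length.

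The heart of the proof is the claim that every $w_i$ is itself the longest corresponding substring of some node $\beta_i$ of the suffix tree. For $i=0$ this node is the root, and for $i=k$ it is $\alpha$, which exists by hypothesis. For $0<i<k$ the node $\nu_i$ is an internal node of a path-compressed tree, hence it has at least two children; equivalently $Cw_i$ is followed in $T$ by at least two distinct characters. Every occurrence of $Cw_i$ in $T$ contains an occurrence of $w_i$ with the same following character, so $w_i$ is also followed by at least two distinct characters in $T$; a string with two distinct right extensions cannot sit in the interior of an edge of the suffix tree, so $w_i$ is the longest corresponding substring of a (branching) node $\beta_i$. This is exactly the ``deeper paths are sparser'' effect: deleting the shared prefix $C$ can only merge right contexts, never split them, so no branching is lost in passing from $Cw_i$ to $w_i$.

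It then remains to verify two routine facts. First, each $\beta_i$ lies on $\pi(\alpha)$: every $w_i$ is a prefix of $A$, so $\beta_i$ is an ancestor of $\alpha$, equal to the root when $i=0$ and equal to $\alpha$ itself when $i=k$. Second, the nodes $\beta_0,\ldots,\beta_k$ are pairwise distinct, because a node is determined by its longest corresponding substring and the strings $w_0,\ldots,w_k$ have pairwise distinct lengths. Hence $i\mapsto\beta_i$ embeds the $k+1$ nodes of $\pi(\gamma,\alpha')$ into $\pi(\alpha)$, which gives $|\pi(\gamma,\alpha')|=k+1\le|\pi(\alpha)|$.

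The only delicate point, and the step I would spell out most carefully, is the middle one: that $w_i$ really is the longest corresponding substring of a node rather than a string buried in the interior of an edge. This needs the right-context counting above together with the fact that $\nu_i$ is a genuine branching node of the suffix tree. It is also worth handling the degenerate cases separately: if $C=\epsilon$ then $\gamma$ is the root, $\alpha'=\alpha$, and the two paths coincide; if $A=\epsilon$ then $\alpha'=\gamma$ and $\alpha$ is the root, so both sides equal $1$.
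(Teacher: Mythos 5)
Your proof is correct and follows essentially the same route as the paper: the paper argues by contradiction, but its key step is exactly your middle claim, namely that a branching node with string $Cw_i$ (two right extensions $Cw_ic_1$, $Cw_ic_2$) forces the suffixes $w_ic_1$, $w_ic_2$ to occur, so $w_i$ is itself the longest string of a node on $\pi(\alpha)$. Your version merely makes the resulting injection $\nu_i \mapsto \beta_i$ explicit (and handles the endpoints and degenerate cases more carefully), which is a presentational refinement rather than a different argument.
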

\begin{proof}
	By counter example. Assume $|\pi(\gamma,\alpha')| > |\pi(\alpha)|$, then there exists a node $\beta$ such that $\beta \in \pi(\gamma,\alpha')$ and $\beta \notin \pi(\alpha)$. We denote by $B$ the longest substring corresponding to $\beta$. There exist two substrings $Bc_1$ and $Bc_2$ in the text for some characters $c_1,c_2$ which $\beta$ discriminates between. Since the suffix tree contains all the suffixes of the text, it also contains a suffix of $Bc_1$ and a suffix of $Bc_2$, and the nodes corresponding to these suffixes exist in $\pi(\alpha)$. This contradicts the initial assumption, since such $\beta$ does not exist. \qed
	
	
\end{proof}

\begin{proposition}
	\label{proposition:suffix_tree_default_splitting}
	For every node $\omega$ in the suffix tree and its corresponding substring $Q$, we store a pair $\alpha_1, \alpha_2$ corresponding to the substrings $A_1=Q[1,\hat{a}]$ and $A_2=Q[\hat{a}+1, \cdot]$ respectively for the smallest $\hat{a} \geq \lceil |Q|/2 \rceil$ in the suffix tree.
\end{proposition}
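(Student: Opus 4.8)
The plan is to read Proposition~\ref{proposition:suffix_tree_default_splitting} as a design convention together with the facts that license it: that the pair it names is well defined and unique, that storing it keeps the structure within $O(n)$ space, and that the asymmetric ``extend the left subquery'' choice is the one the later parallel query can exploit.

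First I would settle well-definedness. Fix a node $\omega$ and its corresponding substring $Q$. The set of indices $a \ge \lceil |Q|/2 \rceil$ for which the suffix tree contains a node whose longest corresponding substring is $Q[1,a]$ is nonempty, since $a=|Q|=|\omega|$ always works ($\omega$ itself is such a node); hence $\hat{a}$, the least such index, exists and is unique. By Lemma~\ref{lemma:node_concatenation_suffix_tree} this $\hat{a}$ is precisely a split point at which a halving pair $(\alpha_1,\alpha_2)$ --- with $\alpha_1$ the node for $A_1=Q[1,\hat{a}]$, $\alpha_2$ the node for $A_2=Q[\hat{a}+1,\cdot]$, and $Q=A_1A_2$ --- exists, and $\alpha_1,\alpha_2$ are themselves unique as the nodes carrying those longest corresponding substrings. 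So the rule pins down a single record. I would also observe that $\hat{a}$ is exactly where a processor handed the prefix $Q[1,\lceil|Q|/2\rceil]$ halts when navigating the suffix tree --- it descends past depth $\lceil|Q|/2\rceil$ to the next branching node or leaf --- so the stored pair is consistent with what the $p=2$ navigation produces on input $Q$ itself.

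Second I would record the space consequence: keeping a constant number of records per node (one canonical pair, selected by the smallest-$\hat{a}$ rule, rather than one for every substring $\omega$ corresponds to), the node-mapping dictionary has $O(n)$ records over the $O(n)$ nodes, and, realised as a perfect hash dictionary, it supports $O(1)$ worst-case lookups. This matches the suffix tree's own $O(n)$ space and avoids the $O(n^2)$ blow-up noted just before the proposition.

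Third --- and this is where Lemma~\ref{lemma:suffix-tree-deeper-paths-sparser-than-shallower-ones} enters, and where the argument needs the most care --- I would explain why extending the \emph{left} subquery, never the right, is the exploitable choice. Instantiating Lemma~\ref{lemma:suffix-tree-deeper-paths-sparser-than-shallower-ones} with $C=A_1$, $A=A_2$, $A'=Q$ yields $|\pi(\alpha_1,\omega)| \le |\pi(\alpha_2)|$: once $\alpha_1$ is fixed, the sub-path from $\alpha_1$ down to $\omega$ has no more nodes than the root-to-$\alpha_2$ path, i.e.\ no more than what one processor already traverses while discovering $\alpha_2$. This is what will keep reconciliation cheap when a query whose length is not $|\omega|$ arrives: the navigating processor lands on an ancestor of (or on) the stored $\alpha_1$, and the residual path it must still walk to reach the stored split point stays within order $m/p$. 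The main obstacle I expect is exactly this last point --- arguing that a single stored pair per node really suffices for all the $\Theta(n)$ query lengths $\omega$ can answer --- but the detailed reconciliation (the ``subquery extension and shortening'' promised earlier) belongs to the query algorithm of the next subsection, so at the level of the proposition it suffices to fix the convention, verify it is well defined, and observe it costs only $O(n)$ space.
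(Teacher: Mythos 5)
Your core justification is fine and is essentially all the paper itself supplies: Proposition~\ref{proposition:suffix_tree_default_splitting} is stated without proof as a storage convention, and its only mathematical content --- that the smallest $\hat{a}\ge\lceil|Q|/2\rceil$ admitting nodes $\alpha_1,\alpha_2$ with $Q=A_1A_2$ exists and determines a unique pair --- is exactly Lemma~\ref{lemma:node_concatenation_suffix_tree}, which you correctly invoke, together with the observation that one record per node keeps the dictionary at $O(n)$ entries.

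Where you diverge from the paper is in the role you assign to Lemma~\ref{lemma:suffix-tree-deeper-paths-sparser-than-shallower-ones}. You read the inequality $|\pi(\alpha_1,\omega)|\le|\pi(\alpha_2)|$ as evidence that the left-extension convention is ``the exploitable choice'' with cheap (order $m/p$) reconciliation for query lengths other than $|\omega|$. The paper draws the opposite conclusion: precisely because the root-to-$\alpha_2$ path is at least as long as the $\alpha_1$-to-$\omega$ path, a query shorter than $|\omega|$ drives the right processor only to a proper ancestor of $\alpha_2$, so the stored key $(\alpha_1,\alpha_2)$ is never probed and $\omega$ is missed; this deficiency is the stated motivation for replacing the convention of Proposition~\ref{proposition:suffix_tree_default_splitting} by Proposition~\ref{proposition:suffix_tree_short_splitting}, which stores the halving pair of the \emph{shortest} substring corresponding to $\omega$. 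Your hedged deferral (``the detailed reconciliation belongs to the query algorithm'') points in the wrong direction for the same reason, and the implied $O(m/p)$ bound is unsupported --- indeed the paper's subsequent theorem shows the consecutive-subquery approach is inherently sequential in the worst case. So: the well-definedness and space parts stand; drop or invert the third part, whose claim the paper explicitly contradicts.
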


Suppose we store mapping $(\alpha_1\alpha_2) \rightarrow \omega$ as defined in Proposition \ref{proposition:suffix_tree_default_splitting} for every node $\omega$ in the suffix tree. In Lemma \ref{lemma:suffix-tree-deeper-paths-sparser-than-shallower-ones} we showed $|\pi(\alpha_2)| \geq |\pi(\alpha_1,\omega)|$. Follows, if a query is shorter that $|\omega|$, the right subquery might end, instead in $\alpha_2$, in one of the ancestors of $\alpha_2$, and we will not be able to find the mapping to $\omega$ in the dictionary. Notice however, if we navigated the tree sequentially, we would successfully find $\omega$, since the $|\pi(\alpha_1, \omega)|$ is smaller.

In order to correctly answer the query of any length corresponding to $\omega$, we store the halving pair of the \emph{shortest} substring corresponding to $\omega$, that is of length $|\omega|-\operatorname{skipvalue}(\omega)+1$.

\begin{proposition}
	\label{proposition:suffix_tree_short_splitting}
	For every node $\omega$ in the suffix tree and its corresponding substring $Q$, we store a halving pair $\beta_1, \beta_2$ corresponding to the substrings $B_1=Q[1,\hat{b}]$ and $B_2=Q[\hat{b}+1, |\omega|-\operatorname{skipvalue}(\omega)+1]$ respectively for the smallest $\hat{b} \geq \lceil (|\omega|-\operatorname{skipvalue}(\omega)+1)/2 \rceil$.
\end{proposition}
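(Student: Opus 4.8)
Proposition \ref{proposition:suffix_tree_short_splitting} is a *definition-style* statement — it specifies which halving pair to store — so a "proof" here amounts to verifying that the described object is well-defined: namely, that the pair $\beta_1,\beta_2$ actually exists in the suffix tree and that $\beta_1\beta_2$ really does split the shortest substring corresponding to $\omega$. Here is how I would argue it.

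The plan is to reduce this statement to Lemma \ref{lemma:node_concatenation_suffix_tree} applied not to $\omega$'s longest corresponding substring $Q$ but to its \emph{shortest} one, call it $Q_s = Q[1,\,|\omega|-\operatorname{skipvalue}(\omega)+1]$. First I would observe that $Q_s$ is itself a substring of $T$ occurring in exactly the same positions as $Q$ (this is precisely the content of the suffix tree definition: $\omega$ corresponds to the whole family $A_i = T[\operatorname{ref}(\nu), \operatorname{ref}(\nu)+i]$ for $i \in \{|\operatorname{parent}(\omega)|+1,\dots,|\omega|\}$, all with equal frequency), and that the node corresponding to $Q_s$ is again $\omega$ — so $\omega$ is a node in the suffix tree whose corresponding substring is $Q_s$, with $|Q_s| = |\omega|-\operatorname{skipvalue}(\omega)+1$. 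Then I would simply invoke Lemma \ref{lemma:node_concatenation_suffix_tree} with $Q_s$ in place of $Q$: it guarantees a halving pair $\beta_1,\beta_2$ corresponding to $B_1 = Q_s[1,\hat b]$ and $B_2 = Q_s[\hat b+1,\cdot]$ for the smallest $\hat b \geq \lceil |Q_s|/2 \rceil = \lceil (|\omega|-\operatorname{skipvalue}(\omega)+1)/2 \rceil$, with $Q_s = B_1 B_2$, both nodes existing in the suffix tree. Since $Q_s[\hat b+1,\cdot] = Q[\hat b + 1,\, |\omega|-\operatorname{skipvalue}(\omega)+1]$, this is exactly the pair named in the proposition.

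Second, I would spell out why storing this particular pair — rather than the one from Proposition \ref{proposition:suffix_tree_default_splitting} — is the \emph{right} choice for answering queries of arbitrary length that terminate at $\omega$. The key input is Lemma \ref{lemma:suffix-tree-deeper-paths-sparser-than-shallower-ones}: taking $\alpha = \beta_2$, $\alpha' = \omega$, and $C = B_1$ (so $\gamma = \beta_1$ and $A' = B_1 B_2\cdot(\text{remainder}) $ — more precisely the argument is that the path from $\beta_1$ down to $\omega$ is no longer than the path from the root down to $\beta_2$), we get that the right-subquery half is the "shallower" of the two halves. Because $\hat b$ is chosen as small as possible subject to $\hat b \geq \lceil |Q_s|/2\rceil$, any actual query $Q'$ with $\omega$-corresponding prefix will split as $Q' = Q'[1,\hat b]\,Q'[\hat b+1,\cdot]$ where the left part still lands exactly on $\beta_1$ (it is a substring containing $B_1$ as a prefix and $|Q'| \geq \hat b$ whenever $Q'$ reaches at least to $B_1$), and the right part $Q'[\hat b+1,\cdot]$ is a prefix of $B_2$ — hence lands on $\beta_2$ or, when the query is short, on an ancestor of $\beta_2$ that is nonetheless still reachable; since we anchored the split at the \emph{shortest} substring, $\beta_2$'s path is minimal and the lookup $(\beta_1,\beta_2)\to\omega$ succeeds. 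I would therefore conclude by remarking that the dictionary entry of Proposition \ref{proposition:suffix_tree_short_splitting} correctly maps to $\omega$ for \emph{every} query length in the range $[\,|\operatorname{parent}(\omega)|+1,\ |\omega|\,]$, whereas the entry of Proposition \ref{proposition:suffix_tree_default_splitting} only handles the single longest one.

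The main obstacle I anticipate is the bookkeeping around the boundary case where $\hat b$ falls exactly on a compressed edge of the suffix tree: then $\beta_1$ (and symmetrically $\beta_2$) is not literally present as a node, and one must invoke the "extend the left subquery until a real node is reached" mechanism from the preprocessing discussion — i.e. replace $\hat b$ by the smallest value $\geq \lceil|Q_s|/2\rceil$ for which a genuine suffix-tree node exists, exactly as Lemma \ref{lemma:node_concatenation_suffix_tree} already permits. Making sure this extension does not disturb the "shallowness" inequality for $\beta_2$ — so that queries still shorter than $|Q_s|$ (which cannot occur, since $Q_s$ is by definition the shortest, so this case is vacuous) and queries of length in $[\,|Q_s|, |\omega|\,]$ are all handled — is the delicate part, but it follows because extending $\hat b$ to the right only shrinks $B_2$, moving $\beta_2$ upward toward the root, which strengthens rather than weakens the condition needed for the dictionary probe to hit an existing node. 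The rest is a direct appeal to the three lemmas already established.
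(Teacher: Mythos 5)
Your first paragraph is essentially the right (and only needed) verification, and it matches what the paper implicitly relies on: the paper states Proposition~\ref{proposition:suffix_tree_short_splitting} as a storage rule with no proof environment at all, its well-definedness resting on Lemma~\ref{lemma:node_concatenation_suffix_tree} applied to the \emph{shortest} substring corresponding to $\omega$, i.e.\ the prefix of length $|\omega|-\operatorname{skipvalue}(\omega)+1$, exactly as you argue. Your observation that queries shorter than this length are vacuous (they end at an ancestor of $\omega$) is also consistent with the paper.

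The gap is in your second paragraph, where you try to upgrade the proposition into a sufficiency claim: that the single stored entry $(\beta_1,\beta_2)\to\omega$ is hit for \emph{every} query length in $[\,|\operatorname{parent}(\omega)|+1,|\omega|\,]$ because the right part $Q'[\hat b+1,\cdot]$ ``is a prefix of $B_2$.'' You have the containment inverted: for a query $Q'$ with $|Q'|>|\omega|-\operatorname{skipvalue}(\omega)+1$, the string $Q'[\hat b+1,\cdot]$ strictly \emph{extends} $B_2$, so a processor that simply navigates its full right subquery ends at a descendant of $\beta_2$ (or deeper along the compressed edge), not at $\beta_2$ or an ancestor, and a single lookup would miss the stored key. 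This is precisely why the paper adds the caveat that ``a query pattern longer than $|\beta_1|+|\beta_2|$ requires special attention'' and designs the query of Section~\ref{sec:parallel-query-in-suffix-tree-query} to gradually probe all feasible candidate pairs during navigation, remembering the deepest successful probe; correctness for longer queries is a property of that probing procedure (and of the shortening via suffix links), not of the dictionary entry alone. Your instantiation of Lemma~\ref{lemma:suffix-tree-deeper-paths-sparser-than-shallower-ones} is also garbled: its role in the paper is the inequality $|\pi(\alpha_2)|\geq|\pi(\alpha_1,\omega)|$, used to explain why the pair of Proposition~\ref{proposition:suffix_tree_default_splitting} can be \emph{missed} by shorter queries, not to certify that the pair of Proposition~\ref{proposition:suffix_tree_short_splitting} is always found. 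If you drop the sufficiency claim (which the proposition does not assert) and keep only the existence argument, your write-up is correct and in line with the paper.
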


Denote $\beta_1$, $\beta_2$, and $\omega$ as in the Proposition \ref{proposition:suffix_tree_short_splitting}. In this case, a query pattern longer than $|\beta_1|+|\beta_2|$ requires special attention. Suppose the query is of length $|\omega|$. Observe $|\beta_1|+|\beta_2| > |\omega|/2$, because if the query were shorter we would have ended at one of the ancestors of $\omega$. Since we are not aware of which halving pair $\beta_1, \beta_2$ is stored in the dictionary, we gradually probe all feasible candidates during the navigation and remember the deepest one which existed in the dictionary. Figure \ref{fig:query-halving-node-propositions}a illustrates the query layout of Propositions \ref{proposition:suffix_tree_default_splitting} and \ref{proposition:suffix_tree_short_splitting}, and \ref{fig:query-halving-node-propositions}b the underlying suffix tree. In the rest of this section, we provide a parallel implementation of this gradual probing, and discuss the level of parallelism.

\begin{figure}[htb]
	\begin{center}
		\leavevmode
		\includegraphics[width=10cm]{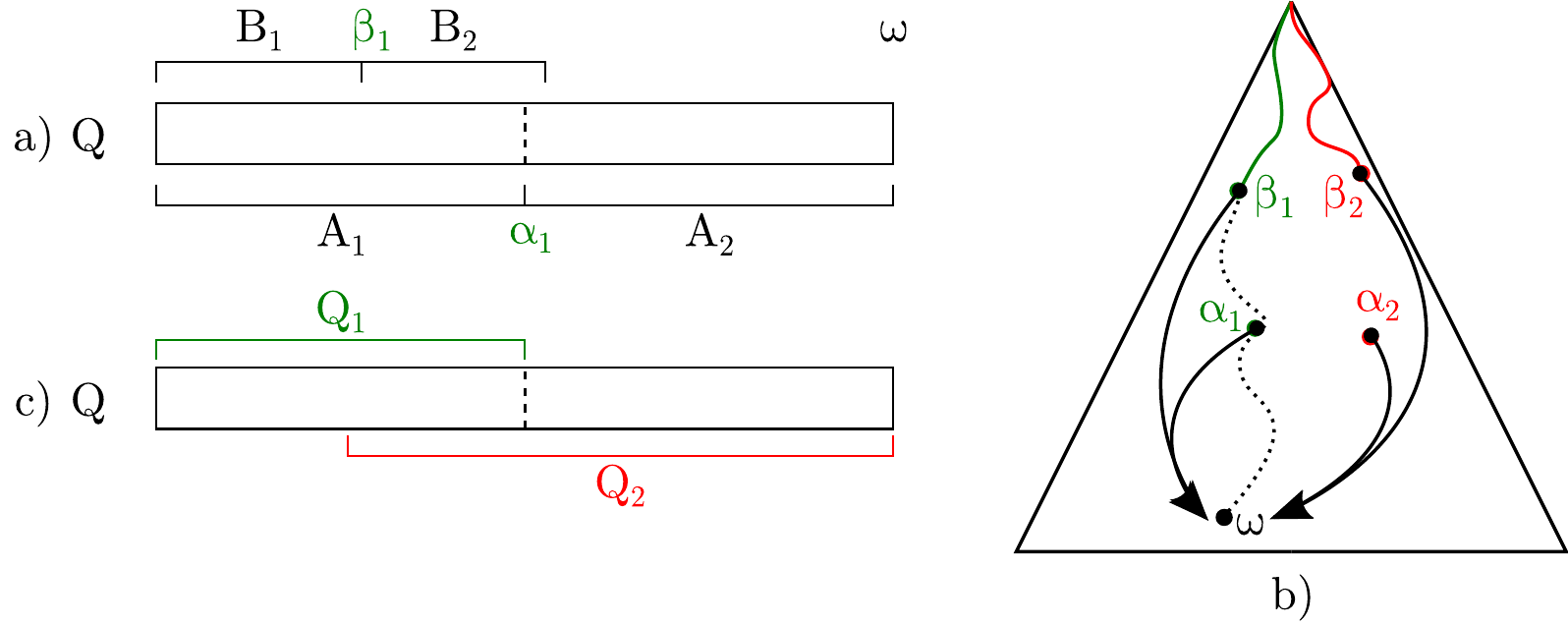}
	\end{center}
	\caption{a) Illustrated Propositions \ref{proposition:suffix_tree_default_splitting} and \ref{proposition:suffix_tree_short_splitting}. The vertical dots in the string illustrate the Position of $|\omega|/2$. b) Illustrated Propositions on the suffix tree, c) Subquery assignment in the adapted query method.}
	\label{fig:query-halving-node-propositions}
\end{figure}

\subsection{Query}
\label{sec:parallel-query-in-suffix-tree-query}

The query operation needs to find the deepest halving pair $\beta_1, \beta_2$ and map $(\beta_1, \beta_2) \rightarrow \omega$, such that the query string will correspond to one of $skipvalue(\omega)$ substrings. Acknowledging Proposition \ref{proposition:suffix_tree_short_splitting} and storing the halving pair of the shortest substring corresponding to $\omega$, the halving pairs of the following prefixes of $Q$ are potentially stored in the dictionary: $$Q[1,\lfloor|Q|/2\rfloor+1], Q[1, \lfloor|Q|/2\rfloor+2], \ldots ,Q{.}$$

We do not consider prefixes shorter than $\lfloor|Q|/2\rfloor+1$ characters since they correspond to ancestors of $\omega$ due to Lemma \ref{lemma:node_concatenation_suffix_tree}. Notice the halving pairs of multiple prefixes can be stored in the dictionary. In this case we need to consider the pair with the largest sum of cumulative skip values of the nodes in the pair. 

Performing a separate query for each prefix defined above would require $\Omega(|Q|^2)$ work overall. We use a more work efficient approach by taking the advantage of the suffix tree. We show how to perform a constant number of passes over $Q$ and check the relevant prefixes.

First, we assign the query string to the left and the right processor as follows. The shortest prefix of the query as defined above is $Q[1,\lfloor|Q|/2\rfloor+1]$. By acknowledging Lemma \ref{lemma:node_concatenation_suffix_tree}, its corresponding halving pair potentially stored in the dictionary corresponds to the subqueries $B_1=Q[1, \lceil|Q|/4\rceil+\hat{x}]$ and $B_2=Q[\lceil|Q|/4\rceil+\hat{x}+1, \lfloor|Q|/2\rfloor+1]$. $\hat{x}$ depends on the shape of the suffix tree and we cannot obtain it in advance. In the worst case $\hat{x}=0$, so we assign the right subquery $Q_2=Q[\lceil|Q|/4\rceil+1,\cdot]$ while keeping the original left subquery $Q_1=Q[1, \lceil|Q|/2\rceil]$ as we did in the parallel suffix trie. Notice $Q_2$ overlaps $Q_1$ for $\lceil|Q_1|/2\rceil$ characters, however we never concatenate $Q_1$ and $Q_2$ the way they were assigned. Instead we shorten the right subquery accordingly during the query procedure so the resulting concatenation of the subqueries is always a valid prefix of $Q$. In order to process the longest prefix, $Q$, the right ends of $Q_1$ and $Q_2$ remain $\lceil|Q|/2\rceil$ and $|Q|$ respectively. Figure \ref{fig:query-halving-node-propositions}c illustrates the subquery assignment. 

During the initial parallel navigation, we need to skip the prefixes shorter than $Q[1,\lfloor|Q|/2\rfloor+1]$ characters. Each processor navigates independently from each other the suffix tree according to the assigned subqueries and stops when each of them navigated $\lceil|Q|/4\rceil+1$ characters. If the cumulative skip value of each processor's node is not exactly $\lceil|Q|/4\rceil+1$, the processor stops at the last node which is strictly smaller than $\lceil|Q|/4\rceil+1$.

Follows the core of the query algorithm which repeats the following two steps. The \emph{navigation} step which we will describe later takes a child corresponding to the next character of either the left and the right subquery or only the right subquery, and the \emph{probe} step which effectively concatenates the substrings corresponding to the current nodes and checks whether the node corresponding to the concatenation exists in the dictionary. We will denote by $\beta_1$ and $\beta_2$ the current nodes reached by the left and the right processor respectively. In the first iteration we probe a halving pair $\beta_1, \beta_2$ in the dictionary, which corresponds to the shortest candidate prefix $Q[1,\lfloor|Q|/2\rfloor+1]$. We check whether such a mapping exists in the dictionary and, if it does, we remember the resulting node $\omega$. We continue with the next iteration of the navigation step and probe the dictionary for the new $\beta_1, \beta_2$. The algorithm continues until the cumulative sum $|\beta_1|+|\beta_2| \geq |Q|$. If any of the navigation steps failed due to non-existing edge in the suffix tree for the next character, the pattern does not exist in the text and we return the empty result set. Also, if a leaf $\nu$ was reached during the navigation step, at most a single occurrence of $Q$ exists in the text at position $ref(\nu)$, if the left subquery reached the leaf, or at position $ref(\nu)-|\beta_1|$, if the right subquery reached the leaf. Algorithm \ref{alg:parallel-suffix-tree-query} formally defines our approach.

\begin{algorithm}[htb]
	\SetKw{Continue}{continue}
	\KwIn{query string $Q$, suffix tree $\tau$}
	
	$\beta_{1} \leftarrow \text{root node of }\tau$
	
	$\beta_{2} \leftarrow \text{root node of }\tau$
	
	$Q_1 = Q[1, \lceil|Q|/2\rceil]$
	
	$Q_2 = Q[\lceil|Q|/4\rceil+1, \cdot]$
	
	\While{$|\beta_1|+|\beta_2| < |Q|$}{
		$(\beta_1, \beta_2) \leftarrow \operatorname{NavigateOne}(\beta_1, \beta_2, Q_1, Q_2, \tau)$
		
		\If{$|\beta_1|+|\beta_2| < \lceil|Q|/2\rceil$}{
			\Continue
		}
		
		$\omega' \leftarrow \operatorname{probe}((\beta_1, \beta_2))$
		
		\If{$\omega' \neq \emptyset$}{
			$\omega \leftarrow \omega'$
		}
	}
	\Return{$\omega$}
	\caption{Parallel query in the suffix tree for $p=2$.}
	\label{alg:parallel-suffix-tree-query}
\end{algorithm}

When we finish the query procedure and obtain the resulting node $\omega$, we need to check the skipped characters during the navigation due to the path compression used in the suffix tree. This step is already present in the original suffix tree data structure and is omitted in Algorithm \ref{alg:parallel-suffix-tree-query}. To check the skipped characters in parallel we access one of the leafs $\nu$ rooted in $\omega$'s subtree. Then, we use a parallel scan where the first processor compares $T[ref(\nu),ref(\nu)+\lceil|Q|/2\rceil]$ to $Q[1,\lceil|Q|/2\rceil]$, and the second processor compares $T[ref(\nu)+\lceil|Q|/2\rceil+1, ref(\nu)+|Q|]$ to $Q[\lceil|Q|/2\rceil+1, \cdot]$. If both subqueries match the text, the query substring truly exists in the text and we report all the leaves rooted in $\omega$'s subtree.

Next, we describe the navigation step in more detail. Assume $\beta_1$ and $\beta_2$ are the current nodes reached by the first and the second processor respectively according to the assigned subqueries, and $B_1$ and $B_2$ are the longest substrings corresponding to $\beta_1$ and $\beta_2$. The invariant of the navigation step is the following: $B_1B_2 = Q[1, |B_1|+|B_2|]$. We denote by $d_1$ and $d_2$ the skip value of the next $\beta_1$'s and $\beta_2$'s child respectively. If $|\beta_1| \geq |\beta_2| + d_2$, the navigation step will follow the edge to $\beta_2$'s child, extend the right subquery $B_2' = B_2 Q[|B_1|+|B_2|+1,|B_1|+|B_2|+d_2]$ while keeping $B_1$ intact and finish. Analogously, if $|\beta_1| < |\beta_2| + d_2$, the navigation step will follow the edge to $\beta_1$'s child and extend the left subquery $B_1' = B_1 Q[|B_1|+1, |B_1|+d_1]$. In this case however, $B_1'$ now overlaps $B_2$ for $d_1$ characters, or if $d_1 > |B_2|$, $B_1'$ covers the whole subquery $B_2$ and more. In the latter case, $B_2$ becomes an empty string corresponding to the root of the suffix tree and we are done. In the former case, in order to produce a correct concatenation of the subqueries, $B_2$ needs to be shortened from the left for $d_1$ characters. Shortening a substring corresponding to a node in the suffix tree can be done by following the suffix link from that node. By recursively following the suffix links we can shorten the substring for arbitrary number of characters. In the next paragraph we show how to perform this operation in $O(1)$ time. After the right subquery was shortened, $B_2' = B_2[d_1+1, |B_2|]$. Notice probing for the halving pair corresponding to $B_1'$ and $B_2'$ is not feasible, since $|B_1'| + |B_2'| = |B_1| + |B_2|$ and the previous $(\beta_1, \beta_2)$ was already probed in the last iteration. Instead, we continue with the next iteration of the navigate step. Algorithm \ref{alg:parallel-suffix-tree-navigate} formally defines the described navigation step and Figure \ref{fig:parallel-suffix-tree-query} illustrates the whole procedure.

\begin{algorithm}[htb]
	\KwIn{current nodes $\beta_1$, $\beta_2$, subqueries $Q_1$, $Q_2$}
	
	$d_2 \leftarrow \operatorname{skipvalue}(\beta_2)$
	
	\While{$|\beta_1| < |\beta_2|+d_2$}{
		$d_1 \leftarrow \operatorname{skipvalue}(\beta_1)$
		
		$\beta_1 \leftarrow \operatorname{child}(\beta_1, Q_1[|\beta_1|])$
		
		\If{$d_1 > |\beta_2|$}{
			$\beta_2 \leftarrow \text{root node of }\tau$
			
			\Return{$(\beta_1, \beta_2)$}
		}
		
		$\operatorname{shorten}(\beta_2, d_1)$
		
		$d_2 \leftarrow \operatorname{skipvalue}(\beta_2)$
	}
	
	$\beta_2 \leftarrow \operatorname{child}(\beta_2, Q_2[|\beta_2|])$
	
	\Return{$(\beta_1, \beta_2)$}
	\caption{The $\operatorname{navigateOne}$ function.}
	\label{alg:parallel-suffix-tree-navigate}
\end{algorithm}

\begin{figure}[htb] 
	\begin{center}
		\leavevmode
		\includegraphics[width=4cm]{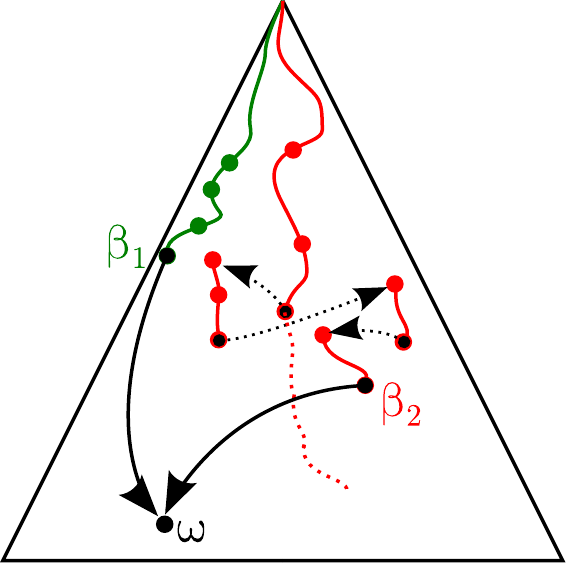}
	\end{center}
	\caption{Illustration of the parallel query for $p=2$ in the suffix tree. The solid lines represent visited paths in the suffix tree by the left and the right processor. Dotted arrows represent the followed suffix links. The dictionary contains a record $(\beta_1\beta_2) \rightarrow \omega$. Notice the dotted line below the path visited by the right processor shows a similarity to the concatenated chunks of paths obtained when following the suffix links.}
	\label{fig:parallel-suffix-tree-query}
\end{figure}

To preform $d$ steps over the suffix links in constant time we use the following method. We define the \emph{suffix links tree} such that for each node in the suffix tree, there exists one node in the suffix links tree. The edge between two nodes in the suffix links tree exists, if there exists a suffix link between the corresponding nodes in the suffix tree. Also, each node in the suffix tree contains a reference to the node in the suffix links tree, and the other way around. Observe that the node corresponding to the shortened substring for $d$ characters from the left is $d^{th}$ ancestor of the original node in the suffix links tree. We use the level ancestor technique to find the ancestor in constant time and maintain linear space (for details, consult to \cite{Bender2004}). The $\operatorname{shorten}$ function used in Algorithm \ref{alg:parallel-suffix-tree-navigate} employs this method.


\paragraph{Work Complexity.}

Assuming each character of a subquery corresponds to one unit of work, the amount of work to be invested by two processors is $\frac{1}{2}m + \frac{3}{4} m = \frac{5}{4} m$.

\paragraph{Time Complexity.}

To evaluate the time complexity of our algorithm, we need to reconsider dependencies of the second processor to the path constructed by the first processor. There exist two such dependencies: 1) the second processor requires information on the skip value of the nodes of the first processor in order to shorten the right subquery accordingly, and 2) if the condition $|\beta_1| < |\beta_2|+d_2$ in Algorithm \ref{alg:parallel-suffix-tree-navigate} is true, the second processor needs to wait the first processor to navigate the suffix tree and extend the left subquery, which leads to sequential execution in some cases.

We treat the first dependency as follows. The second processor requires the skip values of the nodes navigated by the first processor in order to correctly shorten the right subquery. However, it does not immediately require skip values of all the nodes on the first processor's path, but only of the first node that has larger cumulative sum than the cumulative sum of the currently navigated node by the second processor. By using a one node delay, we solve the dependency and we can still construct both paths in parallel in pipeline manner.

The second dependency requires inherently sequential execution in the worst case as a consequence of Lemma \ref{lemma:suffix-tree-deeper-paths-sparser-than-shallower-ones} and Proposition \ref{proposition:suffix_tree_short_splitting}. First, let processors independently of each other navigate the first $|Q|/2$ characters of the query. This costs $|Q|/4$ units of time. Then, let $\beta_2$ denote the current node of the second processor and assume it corresponds to the substring $Q[\lceil |Q|/4 \rceil+1, |Q|/2]$, where the skip value of the next child $d_2=\operatorname{skipvalue}(\operatorname{child}(\beta_2, Q[|Q|/2+1])) > \frac{1}{2} |Q|$. As a consequence of Proposition \ref{proposition:suffix_tree_short_splitting}, the second processor waits for the first processor to navigate the next node and extend the left subquery. The second processor shortens the right subquery accordingly, but the required skip value of a new node, now corresponding to $Q[\lceil |Q|/4 \rceil+1+d_1, |Q|/2]$, might still exceed the length of the left subquery. In the worst case, the first processor processes the whole left subquery, while the right processor only shortens the right subquery $|Q|/4$ times and spending $|Q|/4$ time. Finally, when the right subquery was shortened for the last time, the updated $d_2$ might suddenly become small and the second processor needs to navigate the rest of the assigned subquery spending $|Q|/2$ time in the worst case. Employing any number of processors, we will always spend $|Q|$ time overall in the worst case.

\begin{theorem}
Parallel query in the suffix tree using consecutive subqueries and linear space is an inherently sequential operation in the worst case.
\end{theorem}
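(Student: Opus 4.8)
The plan is to make ``inherently sequential'' precise as follows: there is a family of texts $T$ of length $n$ together with queries $Q$ of length $m=\Theta(n)$ occurring in $T$ for which Algorithm~\ref{alg:parallel-suffix-tree-query} --- and any extension of it that splits $Q$ into consecutive subqueries and keeps only the linear-space halving-pair dictionary of Propositions~\ref{proposition:suffix_tree_default_splitting} and~\ref{proposition:suffix_tree_short_splitting} --- has a critical (data-dependency) path of length $\Omega(m)$, so that its parallel running time is $\Omega(m)$ regardless of $p<2m$. The argument formalizes the Time Complexity discussion preceding the theorem on one explicit worst-case instance.

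First I would fix the shape of the suffix tree, aiming at three properties. (i) There is an edge of skip value exceeding $\tfrac12|Q|$ incident to the node $\beta_2$ reached by the right processor after reading the first $\lceil|Q|/4\rceil$ characters of its subquery. (ii) By Lemma~\ref{lemma:suffix-tree-deeper-paths-sparser-than-shallower-ones}, the subtree below $\beta_2$ is so sparse that every suffix link followed out of it (the $\operatorname{shorten}$ calls in Algorithm~\ref{alg:parallel-suffix-tree-navigate}) again lands on a node whose outgoing edge is longer than the current left subquery. (iii) $Q$ corresponds to a node $\omega$ deep enough that $|\beta_1|+|\beta_2|$ first reaches $|Q|$ only in the last iteration, so that by Lemma~\ref{lemma:node_concatenation_suffix_tree} and Proposition~\ref{proposition:suffix_tree_short_splitting} the dictionary hit may surface only on the final probe and the loop of Algorithm~\ref{alg:parallel-suffix-tree-query} cannot stop early. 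A natural candidate is a text built from two nearly equal blocks agreeing on a long prefix, e.g.\ $T = a\,w\,a\,w'$ followed by a unique delimiter, with $|w|=|w'|=\Theta(n)$ and $w,w'$ sharing a prefix of length $\Theta(n)$ but differing afterwards; this creates the long edge high in the tree, and the run boundaries can be tuned to meet (i)--(iii). I expect verifying that this construction really yields the claimed skip values and suffix-link targets --- checked against the precise rules of $\operatorname{NavigateOne}$ rather than an informal picture --- to be the main obstacle.

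With the instance fixed, I would trace the execution and extract the bound phase by phase along the dependency chain: $\Omega(|Q|)$ time for both processors to read their first $\tfrac12|Q|$ characters independently; then, while $|\beta_1|<|\beta_2|+d_2$ with $d_2>\tfrac12|Q|$, the right processor is blocked on the left processor's skip values for $\Omega(|Q|)$ iterations, during which it performs only $\operatorname{shorten}$ steps while $|\beta_1|$ grows by the small left skip values; and finally, once the left subquery is exhausted and $d_2$ collapses, the right processor must still navigate $\Theta(|Q|)$ more characters of its own subquery before the loop condition $|\beta_1|+|\beta_2|\ge|Q|$ holds. Summing the phases gives $\Omega(|Q|)=\Omega(m)$ parallel time for $p=2$.

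Lastly I would argue that the extra processors of a $p>2$ variant do not help. By rescaling the block lengths the boundary between some adjacent pair of consecutive subqueries can be placed inside the long edge; since each processor needs the skip values of its left neighbour to shorten its own subquery, that pair serializes exactly as in the $p=2$ analysis, while the remaining processors --- holding disjoint suffixes whose concatenation is validated only through the blocked pair --- are idle after the $O(m/p)$ initial navigation of their chunks. Hence the $\Omega(m)$ parallel-time lower bound holds for every admissible $p$, which is the asserted inherent sequentiality. \qed
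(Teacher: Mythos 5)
Your core argument is essentially the paper's own proof: the theorem is established there exactly by the phase-by-phase dependency-chain trace you reproduce (an initial $|Q|/4$ independent navigation, then the right processor blocked on the left processor's skip values and performing only $\operatorname{shorten}$ steps for $\Theta(|Q|)$ iterations, then a final $\Theta(|Q|)$ navigation of the right subquery once $d_2$ collapses), with the adversarial tree shape merely asserted ("might still exceed", "might suddenly become small") rather than witnessed by an explicit text. The extra ingredients you plan --- an explicit witness family such as $T=a\,w\,a\,w'$ satisfying your properties (i)--(iii), and the serialization argument for $p>2$ --- go beyond what the paper does, and you rightly flag that verifying them (in particular that every shorten again lands on an overlong edge while the final dictionary probe still succeeds) is the unfinished and genuinely nontrivial part; the paper leaves that same step unverified.
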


\section{Interleaved suffix tree}

In the previous section we show that the parallel query method used in a suffix trie does not allow reasonable speed-ups in the suffix tree. Therefore, we explore a different approach to the parallel query where instead of splitting the query string into $p$ consecutive substrings, we split it into $p$-interleaving subsequences. To answer such an interleaved query, we need to construct a different, interleaved suffix tree and navigate this data structure instead of the original suffix tree. Finally, we map the obtained nodes from the interleaved suffix tree to a node in the original suffix tree and report the results.


\subsection{$k$-interleaved string}

\begin{definition}
	Given a string $X$ consisting of $n$ characters
	$X=c_1c_2 \ldots c_n$ we define $k$-interleaved subsequences $X_i$ of string $X$ for all $i \in \lbrace 1 \ldots k \rbrace$ such that
	$ X_i = c_i c_{i+k} c_{i+2k} \ldots c_{i+k \lfloor (n-i)/k \rfloor }{.} $
\end{definition}

For example the $2$-interleaved subsequences of the string $X=\texttt{ABRACADABRA}$ are $X_1=\texttt{ARCDBA}$ and $X_2=\texttt{BAAAR}$. In case of $k=1$, the $1$-interleaved subsequence $X_1$ is the original string $X$. In case of $k=n$, we obtain $n$-interleaved subsequences $X_i=c_i$ for $i\in \lbrace 1 \ldots n \rbrace$.



To deinterleave $k$-interleaved subsequences and construct the original sequence, we take one character at a time from subsequence $X_i$ for $i=[1 \ldots k]$ repeatedly until we reach the end of any subsequence. If the subsequences are arbitrarily long, the resulting string consists of the prefixes of the given subsequences.

\begin{definition}
	\label{def:interleaving-subsequences-arbitrary-long}
	Deinterleaving $k$ subsequences $X_i$ to resulting string $X$, where $|X_i|$ for $i \in \lbrace 1, \ldots, k \rbrace$ is arbitrary long, is done as follows:
	$$X[i]=X_{(i-1) \bmod k +1}[\lfloor (i-1)/k \rfloor+1]$$
	for all $i=\lbrace 1, \ldots, k \cdot min(|X_1|,\ldots,|X_k|) + \argmin\limits_j |X_j|\rbrace{.}$
\end{definition}
	
In the rest of the paper we will always deinterleave $2$-interleaved subsequences of arbitrary length. Acknowledging Definition \ref{def:interleaving-subsequences-arbitrary-long} and setting $k=2$, we observe the following.

\begin{property}
	\label{property:interleaving-of-non-equally-long-subsequences}
	Given $2$-interleaved subsequences $X_1$ and $X_2$ of arbitrary length, deinterleaving subsequences produces string $X$ of length $|X|=2 \cdot min(|X_1|, |X_2|)$, if $|X_1| \leq |X_2|$, or $|X|=2 \cdot min(|X_1|, |X_2|)+1$, if $|X_1| > |X_2|$. A compacted way of writing is $|X|=min(|X_1|,|X_2|)+min(|X_1|,|X_2|+1)$.
\end{property}


\subsection{$k$-interleaved suffix tree}

\begin{definition}
	$k$-interleaved suffix tree $\tau^{(k)}$ is a suffix tree containing all suffixes of all $k$-interleaved subsequences of the input text $T$.
\end{definition}

The construction of $k$-interleaved suffix tree is the same as of any other suffix tree, for example by inserting suffixes of each $k$-interleaved subsequence $T_i$ for $i=\lbrace 1 \ldots k\rbrace$ of the text $T$. Navigating a $k$-interleaved tree is also the same as navigating the original suffix tree. The leaves of the resulting node however, are the locations of the query string in the text where there is $k-1$ characters in the text between each character of the query string.

The delimiter character in $k$-interleaved suffix tree plays an additional role. Recall the delimiter character initially described in Section \ref{sec:trie-and-patricia-trie} used to discriminate between the strings where one string is a prefix of the other, or in case of the suffix tree, one suffix is a prefix of another suffix. The $k$-interleaved suffix tree, in addition to patricia tree, also needs to discriminate between equal suffixes which are part of different subsequences, and consequently appear at different locations in the text. To solve this, we append $k$ unique delimiter characters to the input text, so that each subsequence ends with a unique delimiter character (see Figure \ref{fig:interleave-suffix-trie}). We wrap up this subsection with two obvious properties:

\begin{property}
	Height of the $k$-interleaved suffix tree for the input text of length $n$ is at most $\lceil n/k \rceil$.
\end{property}

\begin{property}
	The number of leaves in $k$-interleaved suffix tree equals the number of leaves in the original suffix tree, excluding the leaves representing the delimiter characters.
\end{property}

\begin{figure}[htb]
	\begin{center}
		\leavevmode
		\includegraphics[width=8.5cm]{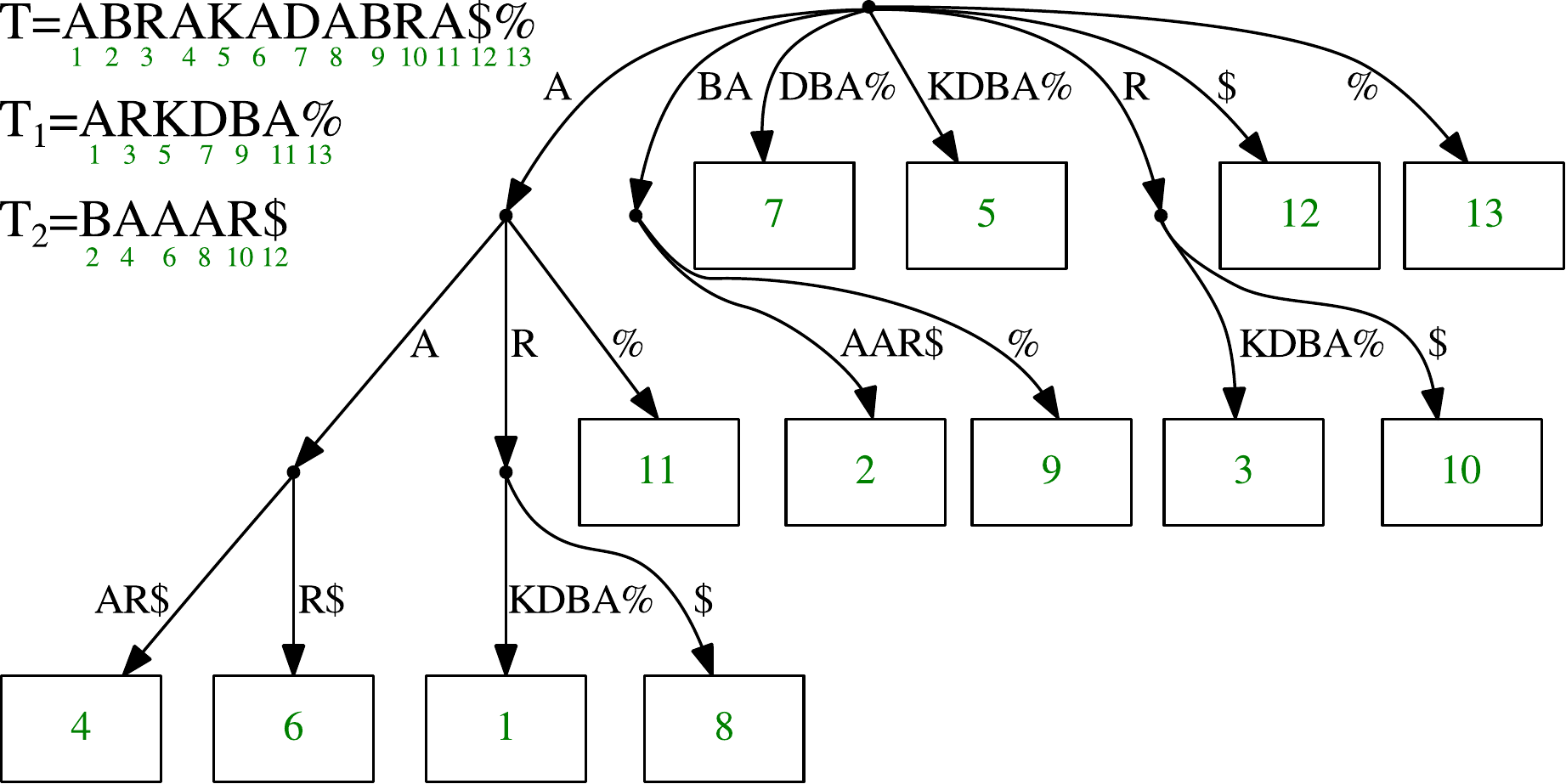}
	\end{center}
	\caption{A $2$-interleaved suffix tree for input text \texttt{ABRACADABRA}. Notice two delimiter characters \texttt{\$} and \texttt{\%} respectively.}
	\label{fig:interleave-suffix-trie}
\end{figure}




Notice, by keeping the same number of leaf nodes, the maximum height of the $k$-interleaved suffix tree in comparison to the original suffix tree is smaller because the width of the tree has been increased due to new delimiter characters in the alphabet. 

\subsection{Layered interleaved suffix tree}
\label{sec:layered-interleave-suffix-tree}

The layered interleaved suffix tree data structure consists of $\lg p$ layers, where at each layer $k$ for $k=\lbrace 2^0, 2^1, 2^2, \ldots, p \rbrace$ we store $k$-interleaved suffix tree $\tau^{(k)}$ for the same input text $T$, and a dictionary which maps a pair of nodes in layer $k$ to a node in layer $k/2$. The query using $j=2^x \leq p$ processors for some integer $x$ is done by navigating the layer $\tau^{(j)}$ and then mapping the obtained nodes from $\tau^{(j)} \rightarrow \tau^{(j/2)} \rightarrow \ldots \rightarrow \tau^{(1)}{.}$ The mapping merges partially navigated paths to obtain the resulting node. We will discuss the query procedure in detail in the next subsection. First, we formally describe how to map from one layer to the next one.

\begin{lemma}
	\label{lemma:interleave-suffix-tree-mapping}
	For each node $\omega \in \tau^{(k/2)}$ there exists a pair of nodes $\omega_1, \omega_2 \in \tau^{(k)}$
	such that deinterleaving the longest substrings which $\omega_1$ and $\omega_2$ correspond to, results in one of the substrings which $\omega$ corresponds to.
\end{lemma}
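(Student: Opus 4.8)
The plan is to exhibit the pair $\omega_1,\omega_2$ explicitly from $\omega$, by splitting one of the substrings that $\omega$ corresponds to in the $(k/2)$-interleaved tree into its two $k$-interleaved "sub-subsequences" and then arguing that the nodes realizing these two pieces actually live in $\tau^{(k)}$. First I would fix a substring $S$ among the $\operatorname{skipvalue}(\omega)$ substrings that $\omega$ corresponds to in $\tau^{(k/2)}$; say $S$ occurs starting at position $r$ of some $(k/2)$-interleaved subsequence $T_i$, i.e. $S = T_i[r, r+|S|-1]$. Writing $T_i = c_i c_{i+k/2} c_{i+2(k/2)} \cdots$, the characters of $S$ are spaced $k/2$ apart in $T$; taking every other one of them splits $S$ into two sequences $S_1$ (the odd-indexed characters of $S$) and $S_2$ (the even-indexed ones), whose characters are now spaced $k$ apart in $T$, and which are therefore genuine substrings of two of the $k$-interleaved subsequences of $T$. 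By Property~\ref{property:interleaving-of-non-equally-long-subsequences} with $k=2$, deinterleaving $S_1$ and $S_2$ (in that order) recovers exactly $S$, since $|S_1| = \lceil |S|/2\rceil \geq |S_2| = \lfloor |S|/2\rfloor$ forces the deinterleaved length to be $|S_1|+|S_2| = |S|$.

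Next I would invoke the defining property of a suffix tree: $\tau^{(k)}$ contains every suffix of every $k$-interleaved subsequence of $T$, hence every substring of every such subsequence is represented by a (unique) node of $\tau^{(k)}$ — possibly an internal node, possibly a point in the interior of an edge. If $S_1$ happens to fall in the interior of an edge, I take $\omega_1$ to be the lower endpoint of that edge (the first explicit node whose cumulative skip value is $\geq |S_1|$); the longest substring corresponding to $\omega_1$ then has $S_1$ as a prefix. This is exactly the same "round down to the nearest explicit node" phenomenon already handled in Lemmas~\ref{lemma:node_concatenation_suffix_tree} and \ref{lemma:suffix-tree-deeper-paths-sparser-than-shallower-ones}, and the statement of the present lemma is phrased permissively ("results in one of the substrings which $\omega$ corresponds to") precisely to absorb it. The same applies to $\omega_2$ and $S_2$. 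Then deinterleaving the longest substrings of $\omega_1$ and $\omega_2$ produces a string having $S$ as a prefix, and by the suffix-tree structure any prefix of a corresponding substring of $\omega$ that is long enough is itself a corresponding substring of $\omega$ or of one of its ancestors; choosing $S$ to be a corresponding substring of $\omega$ of minimal length (as in Proposition~\ref{proposition:suffix_tree_short_splitting}) makes the deinterleaved result land on $\omega$ itself.

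The step I expect to be the real obstacle is making the correspondence between "substring of $T_i^{(k/2)}$" and "pair of substrings of $k$-interleaved subsequences of $T$" airtight with respect to the delimiter characters. Recall that $\tau^{(k)}$ and $\tau^{(k/2)}$ use \emph{different} numbers of unique delimiters ($k$ versus $k/2$), so a suffix of $T_i^{(k/2)}$ that runs into its delimiter does not deinterleave cleanly into two suffixes of $k$-interleaved subsequences — the delimiter has to be "split" too, and one has to check that the two $k$-interleaved subsequences obtained by sub-sampling $T_i^{(k/2)}$ are in fact among the $k$ subsequences of the delimiter-augmented text, with matching delimiters in the right slots. I would handle this by being careful to take $S$ to be a substring strictly interior to $T_i^{(k/2)}$ (not touching its delimiter), which is legitimate because we only need \emph{some} corresponding substring of $\omega$, and the shortest one from Proposition~\ref{proposition:suffix_tree_short_splitting} is short enough to avoid the delimiter except in the degenerate case where $\omega$ is a leaf representing a suffix ending at a delimiter — a boundary case I would dispatch separately by a direct check. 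A secondary subtlety is the index bookkeeping: which of the $k$ subsequences $S_1$ and $S_2$ belong to depends on the parity of $r$ and on $i$, and I would record this explicitly so that the ordering $(\omega_1,\omega_2)$ — which, as stressed after Lemma~\ref{lemma:node_concatenation}, matters — is the one that deinterleaves back to $S$ rather than to a shifted sequence.
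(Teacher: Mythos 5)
Your construction of the pair is the same as the paper's: split an occurrence $T[a,a+l]$ of a substring that $\omega$ corresponds to into its two $k$-interleaved halves $S_1,S_2$, observe these are substrings of $k$-interleaved subsequences of $T$, and take $\omega_1,\omega_2$ to be the first explicit nodes of $\tau^{(k)}$ with $|\omega_1|\geq|S_1|$, $|\omega_2|\geq|S_2|$. Up to and including the observation that deinterleaving the longest substrings of $\omega_1,\omega_2$ yields a string with $S$ as a prefix (so the result corresponds to $\omega$ or one of its descendants), you match the paper's first half.

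The gap is in the other direction, and it is the heart of the lemma: you must also show that the deinterleaved string is \emph{not longer} than $|\omega|$, otherwise it is a corresponding substring of a proper descendant of $\omega$, not of $\omega$. Your justification --- ``choosing $S$ to be a corresponding substring of $\omega$ of minimal length makes the deinterleaved result land on $\omega$ itself'' --- does not follow: the length of the deinterleaved string is governed by the cumulative skip values $|\omega_1|,|\omega_2|$, which are determined by where explicit nodes of $\tau^{(k)}$ happen to lie, and path compression can place the first explicit node far below $|S_1|$ (resp.\ $|S_2|$). For instance, if both half-paths had no explicit node until depths much larger than $|\omega|/2$, the longest substrings of $\omega_1,\omega_2$ would deinterleave to a string longer than $|\omega|$, and the chosen pair would map to a descendant; nothing in your argument excludes this, and the cited fact about prefixes of corresponding substrings points the wrong way (the deinterleaved string is an \emph{extension} of $S$, not a prefix of something already known to correspond to $\omega$). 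The paper closes exactly this hole with a discrimination argument: since $\omega$ is explicit in $\tau^{(k/2)}$, it separates two continuations differing at position $|\omega|+1$, and that differing character falls, for both continuations, in the same half, forcing a branching (hence explicit) node at roughly depth $\lfloor|\omega|/2\rfloor$ on one of the two paths in $\tau^{(k)}$; this caps $\min(|\omega_1|,|\omega_2|)$, and by the length formula of Property~\ref{property:interleaving-of-non-equally-long-subsequences} the deinterleaved length is at most $|\omega|$. Your remarks on delimiters and on recording which of the $k$ subsequences $S_1,S_2$ belong to are reasonable bookkeeping, but they do not substitute for this missing upper-bound argument (the leaf case, where $\omega$ has no two continuations, still needs the separate treatment you sketch).
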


\begin{proof}
	Without loss of generality we set $k=2$. Assume one of the substrings which $\omega$ corresponds to is a substring of the input text $T[a,a+l]$ where $|\omega|-\operatorname{skipvalue}(\omega) \leq l \leq |\omega|$. Then, there exist two $2$-interleaved subsequences of $T$:
	$$X_1=T[a]T[a+2] \ldots T[a+2\lfloor l/2 \rfloor]$$
	$$X_2=T[a+1]T[a+3] \ldots T[a+2\lfloor (l-1)/2 \rfloor+1]{.}$$
	
	Notice $X_2$ is empty, if $\operatorname{skipvalue}(\omega)=1$.
	
	$X_1$ and $X_2$ are prefixes of two suffixes contained in $\tau^{(k)}$. Take the leaves corresponding to these suffixes and find the first nodes $\omega_1$ and $\omega_2$ respectively on a path from a root to a leaf which cumulative sum $|\omega_1| \geq |X_1|$ and $|\omega_2| \geq |X_2|$. Follows $\omega_1$ and $\omega_2$ exist in $\tau^{(k)}$ such that deinterleaving corresponding longest substrings is long $|\omega_1|+|\omega_2| \geq l$ characters and not shorter. This proves that $\omega$ or one of its descendants will correspond to the deinterleaved string.
	
	$\omega$ discriminates between at least two substrings in $T$ whose characters are different at position $|\omega|+1$. Consequently, $\omega_1$ or $\omega_2$ must also discriminate between at least two subsequences whose characters are different at position $\lfloor|\omega|/2\rfloor+1$. Following Definition \ref{def:interleaving-subsequences-arbitrary-long}, the length of the resulting string depends on the length of the shorter subsequence, formally \begin{equation*}
	\begin{split}
	|X| &= min(|X_1|,|X_2|)+min(|X_1|,|X_2|+1) \\
	&=min(|\omega_1|,|\omega_2|)+min(|\omega_1|,|\omega_2|+1) \leq |\omega|{.} \\
	\end{split}
	\end{equation*}
	This proves that $\omega$ or one of its predecessors will correspond to the resulting string. Since we excluded the predecessors in the previous paragraph, this concludes that exactly $\omega$ will correspond to deinterleaving of $\omega_1$ and $\omega_2$. 
	\qed 
\end{proof}

Each layer $k$ in layered interleaved suffix tree contains a dictionary which maps a pair of nodes $\omega_1,\omega_2$ in $\tau^{(k)}$ to a node $\omega$ in $\tau^{(k/2)}$ according to Lemma \ref{lemma:interleave-suffix-tree-mapping} for all nodes in $\tau^{(k/2)}$. Intuitively the mapping efficiently deinterleaves the longest subsequences corresponding to $\omega_1$ and $\omega_2$, to node $\omega$ corresponding to the resulting string. In turn the mapping fixates the $k-1$ characters in the text between each character of the query string, which occurred, if we queried the $k$-interleaved suffix tree as described in the previous subsection.

To construct the dictionary at each layer, we need to consider which pair $(\omega_1,\omega_2) \rightarrow \omega$ to store. Because of the path compression, node $\omega$ corresponds to $O(n)$ substrings. Storing each pair of nodes $\omega_1, \omega_2$ for which their corresponding longest substrings deinterleave to one of the substrings corresponding to $\omega$ would require $O(n^2)$ space per each layer. We use the same reasoning as in Propositions \ref{proposition:suffix_tree_default_splitting} and \ref{proposition:suffix_tree_short_splitting} in the previous section. We find and store a pair of nodes which corresponding substrings deinterleave to the \emph{shortest} substring $\omega$ corresponds to.

\begin{definition}
	\label{def:interleave-suffix-tree-dictionary}
	Dictionary of $\tau^{(k)}$ is a dictionary, which for each node $\omega \in \tau^{(k/2)}$ maps a pair of nodes $\omega_1, \omega_2 \in \tau^{(k)}$ to $\omega$ for the smallest $|\omega_1|$ and $|\omega_2|$ such that deinterleaving the longest substrings which $\omega_1$ and $\omega_2$ correspond to results in one of the substrings $\omega$ corresponds to.
\end{definition}

Storing such pairs, we assure nodes $\omega_1,\omega_2$ will always be reachable by the shortest possible query corresponding to $\omega$. This, however, imposes additional work during the query phase in order to correctly find mappings for the longer query strings which we describe next.


\subsection{Parallel Query in Layered Interleave Suffix Tree}

We assign $p$-interleaved subsequences of $Q$ to processors $1 \ldots p$ respectively. Then, each processor, independently navigates $\tau^{(p)}$ according to the assigned subsequence. 

Let $\pi_i^{(j)}$ denote the navigated path by processor $i$ in $\tau^{(j)}$. Initially we navigated paths $\pi_i^{(p)}$. We need to interleave the longest substrings which the nodes on these paths correspond to and obtain the paths at layer $p/2$. At each layer $j$, the interleaving is done pairwise as follows. We interleave each path $\pi_i^{(j)}$ with $\pi_{i+1}^{(j)}$ for $i=\lbrace 1, 3, \ldots j-1\rbrace$ and obtain $j/2$ new paths at layer $j/2$. We continue interleaving the paths recursively until we reach the first layer, that is the original suffix tree.

To efficiently interleave all substrings which the nodes of two paths $\pi_i^{(j)}$ and $\pi_{i+1}^{(j)}$ correspond to, we probe the dictionary at layer $j$ described in the previous subsection. The probed key consists of one node from $\pi_i^{(j)}$ and the second node from $\pi_{i+1}^{(j)}$. If the probe is successful, we append the resulting node to the new path $\pi_{\lceil i/2 \rceil}^{(j/2)}$ in the increasing cumulative skip value order.

Next, we explore the required number of probes in the dictionary in order to construct a valid path at the next layer. The trivial upper bound probing all possible pairs of nodes in both paths is $\left( m/j \right)^2$, where $m$ denotes the original query length. We reduce the number of probes to $O(m/j)$ in the following way.

\begin{lemma}
	\label{lemma:interleaving-paths-time}
	Deinterleaving paths $\pi^{(j)}_i$ and $\pi^{(j)}_{i+1}$ requires $|\pi^{(j)}_i|+|\pi^{(j)}_{i+1}|$ probes in the dictionary.
\end{lemma}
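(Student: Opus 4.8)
The plan is to show that the two-pointer sweep over the pair of paths terminates after exactly $|\pi^{(j)}_i|+|\pi^{(j)}_{i+1}|$ probes by arguing that each probe advances one of the two pointers by one node, and that the sweep stops precisely when both pointers have run off the ends of their respective paths. First I would set up the invariant maintained by the sweep: at each step we hold a node $\omega_1$ on $\pi^{(j)}_i$ and a node $\omega_2$ on $\pi^{(j)}_{i+1}$, and the pair $(\omega_1,\omega_2)$ is the candidate key to be probed in the dictionary of $\tau^{(j)}$. By Property \ref{property:interleaving-of-non-equally-long-subsequences}, the deinterleaved string of the longest substrings of $\omega_1$ and $\omega_2$ has length $\min(|\omega_1|,|\omega_2|)+\min(|\omega_1|,|\omega_2|+1)$, which is essentially $2\min(|\omega_1|,|\omega_2|)$ up to one character; this tells us which of the two pointers is ``lagging'' and must be advanced next, exactly mirroring the $|\beta_1|$ vs.\ $|\beta_2|+d_2$ comparison used in Algorithm \ref{alg:parallel-suffix-tree-navigate}. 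Concretely, if $|\omega_1|\le |\omega_2|$ we advance $\omega_1$ to its child on $\pi^{(j)}_i$, otherwise we advance $\omega_2$ to its child on $\pi^{(j)}_{i+1}$, and then we probe the updated pair.

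Next I would verify two things. First, \emph{termination}: each iteration advances exactly one pointer by one node, so after at most $|\pi^{(j)}_i|+|\pi^{(j)}_{i+1}|$ iterations both pointers have reached the leaves of their paths and the deinterleaved length $|\omega_1|+|\omega_2|$ has reached (or exceeded) $m/(j/2)$ — the full length of the path at layer $j/2$ — so the sweep halts. Second, \emph{no useful probe is skipped}: I must argue that the single pointer-advancement per step never jumps over a pair whose deinterleaving would land on a node $\omega\in\tau^{(j/2)}$ that belongs on the constructed path. This follows from Definition \ref{def:interleave-suffix-tree-dictionary}: the dictionary stores, for each such $\omega$, the pair $(\omega_1,\omega_2)$ with \emph{smallest} $|\omega_1|,|\omega_2|$ realizing the shortest substring of $\omega$, and Lemma \ref{lemma:interleave-suffix-tree-mapping} guarantees that such a pair exists on the two paths; since the sweep visits pointer positions in non-decreasing order of both $|\omega_1|$ and $|\omega_2|$, and since at the minimal realizing position the lagging-pointer rule selects exactly the pair stored in the dictionary, that pair is among the probed candidates. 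Longer queries corresponding to $\omega$ are handled because the sweep continues past the first successful probe and we keep the deepest successful $\omega$, exactly as in Algorithm \ref{alg:parallel-suffix-tree-query}.

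The main obstacle I expect is the second point: showing that advancing only the lagging pointer, rather than probing all $(m/j)^2$ pairs, still catches every dictionary hit that lies on the sought path. The subtlety is that the dictionary key for a given $\omega$ is the pair of \emph{shortest} realizing nodes, so I need the sweep's monotone trajectory through pointer space to pass through that exact minimal pair; the Property \ref{property:interleaving-of-non-equally-long-subsequences} length formula is what pins the minimal pair down to a unique $(|\omega_1|,|\omega_2|)$ (up to the $\pm1$ parity), and a short case analysis on whether $|\omega_1|\le|\omega_2|$ confirms the lagging-pointer rule lands there. Once that is established, the probe count is immediate: one probe per iteration, $|\pi^{(j)}_i|+|\pi^{(j)}_{i+1}|$ iterations. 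I would close by noting that since each path at layer $j$ has at most $m/j$ nodes, this is $O(m/j)$ probes, matching the claimed bound and feeding into the overall $O(m\lg p)$ work figure.
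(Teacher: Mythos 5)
Your proposal is correct and follows essentially the same route as the paper: a two-pointer merge sweep over the two paths, one dictionary probe per pointer advancement, terminating once both paths are exhausted, giving exactly $|\pi^{(j)}_i|+|\pi^{(j)}_{i+1}|$ probes. The only cosmetic difference is your advancement criterion (compare $|\omega_1|$ vs.\ $|\omega_2|$) versus the paper's comparison of the children's cumulative skip values, and you additionally spell out the ``no useful probe is skipped'' correctness argument that the paper leaves implicit.
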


\begin{proof}
	We start by $\omega_1 \leftarrow \pi_i^{(j)}[1]$ and $\omega_2 \leftarrow \pi_{i+1}^{(j)}[1]$. Then, we compare the cumulative skip values of children of $\omega_1$ and $\omega_2$, and we follow an edge to a child with the smaller one due to Property \ref{property:interleaving-of-non-equally-long-subsequences}. If the values of both children are equal, we follow an edge to $\omega_1$'s child. We repeat the procedure until we reached the end of both paths. Each time we follow an edge, we perform a probe $\omega_1,\omega_2$ to the dictionary. Since each time we follow one node, the procedure requires exactly $|\pi_i^{(j)}|+|\pi_{i+1}^{(j)}|$ probes to the dictionary.
	\qed
\end{proof}

Deinterleaving procedure introduced in the proof of Lemma \ref{lemma:interleaving-paths-time} is embarrassingly parallel. Each of $j$ processors is assigned a consecutive $m/j$ nodes of paths. Notice the paths need to be aligned according to the cumulative skip value of the nodes. This way, we optimally employ $j$ processors and construct the path at the next layer. Figure \ref{fig:interleave-st-parallel-query} illustrates the navigation in $2$-interleaved suffix tree and the required probes in the dictionary.

\begin{figure}[htb]
	\begin{center}
		\leavevmode
		\includegraphics[width=8cm]{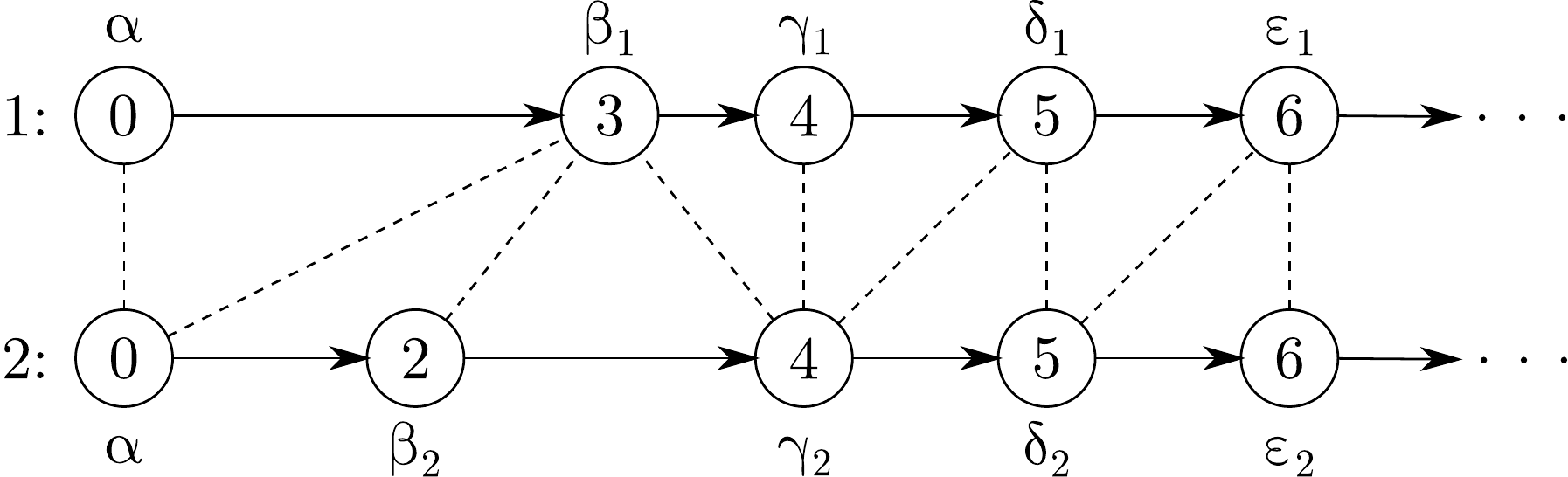}
	\end{center}
	\caption{Dictionary probes during the parallel query in $2$-interleaved suffix tree for $p=2$. The two paths on the figure are visited nodes by processors $1$ and $2$. $\alpha$ is the root node of $2$-interleaved suffix tree common to both processors, while the other nodes are separated. Numbers inside nodes denote the cumulative skip value of the node. Dotted lines illustrate the probes to the dictionary during the parallel query.}
	\label{fig:interleave-st-parallel-query}
\end{figure}

Finally, when we constructed a path at the first layer $\pi_1^{(1)}$, the deepest node on this path is the resulting node of the query. We check whether skipped characters during the query match the query string and report all the leaves of the subtree rooted at the resulting node.

%

\paragraph{Time and Work Complexity.}
The assignment of interleaved subqueries to $p$ processors can be done implicitly. Each processor navigates the $p$-interleaved suffix tree independently which requires $O(m/p)$ time and $O(m)$ work. Finally, deinterleaving intermediate paths requires $O(\frac{m}{p} \lg p)$ time on CREW PRAM and $O(m \lg p)$ work.

In detail, we deinterleave each pair of paths at the top layer in parallel as described before. Also, each pair of paths deinterleaves independently from the others. This way we employ all $p$ processors and spend $O(m/p)$ time. Notice each probe to dictionary takes constant time by using the perfect hash table. At the next layer, we deinterleave $p/2$ paths of length $\leq 2m/p$. In general, at each layer we use $p$ processors and spend $O(m/p)$ time. After $\lg p$ steps, we construct the final path in the original suffix tree.

\paragraph{Space Complexity.}
Each $k$-interleaved suffix tree requires $O(n)$ space. The layered interleaved suffix tree consists of $\lg p$ layers of $k$-interleaved suffix trees and dictionaries requiring $O(n \lg p)$ space overall. Notice, if we support parallel query for the fixed $p$ and not smaller, we can only keep $p$-interleaved suffix tree at layer $p$, $\lg p-1$ dictionaries, and the original suffix tree. Asymptotically this still requires $O(n \lg p)$ space. The intermediate paths during the query procedure require $O(m+p)$ temporary space.

\begin{theorem}
	Parallel query in layered interleaved suffix tree requires $O(\frac{m}{p} \lg p)$ time on CREW PRAM and $O(n \lg p)$ space.
\end{theorem}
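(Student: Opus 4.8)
The plan is to bound space and time separately, in each case assembling the already-established costs of the individual components of the layered data structure and of the query subroutines. For the space bound I would first observe that $\tau^{(k)}$ is a generalized Patricia-style suffix tree over the $k$ interleaved subsequences of $T$ (with $k$ distinct delimiters appended), whose total length is $n+k = O(n)$ since $k \le p \ll n$; hence $\tau^{(k)}$ occupies $O(n)$ words and, by the property on the number of leaves, has $O(n)$ nodes irrespective of $k$. Each layer additionally carries the dictionary of Definition~\ref{def:interleave-suffix-tree-dictionary}, which stores one pair-to-node record per node of $\tau^{(k/2)}$ in a static perfect hash table, again $O(n)$ words. Summing $O(n)$ over the $\lg p$ layers $k = 2^0, \dots, 2^{\lg p}$ gives $O(n\lg p)$; the $O(m+p) = O(n)$ scratch space for the intermediate paths is absorbed into this, which settles the space claim.

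For the time bound I would decompose the query into three phases: (i) the implicit assignment of the $p$-interleaved subsequences of $Q$, which is $O(1)$ parallel time because processor $i$ reads $Q[i], Q[i+p], \dots$ directly; (ii) the $p$ independent navigations of $\tau^{(p)}$, each along a subsequence of length at most $\lceil m/p\rceil$, costing $O(m/p)$ time and $O(m)$ work; and (iii) the $\lg p$ pairwise deinterleaving rounds. The heart of the argument is a single round. Entering the round that merges layer $j$ into layer $j/2$ there are $j \le p$ navigated paths, each of cumulative length $O(m/j)$; by Lemma~\ref{lemma:interleaving-paths-time} deinterleaving one pair of them uses $O(m/j)$ dictionary probes, each $O(1)$ via the perfect hash table, so the round performs $O((j/2)(m/j)) = O(m)$ work. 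To turn this into $O(m/p)$ time I would assign $\Theta(p/j)$ processors to each of the $j/2$ pairs and argue that deinterleaving a single pair parallelizes perfectly: if each path is stored as an array indexed by cumulative skip value, then by Property~\ref{property:interleaving-of-non-equally-long-subsequences} the pair of nodes probed at ``merged position'' $t$ is a function of $t$ alone --- the node of $\pi_i^{(j)}$ with cumulative skip value $\lceil t/2\rceil$ paired with the node of $\pi_{i+1}^{(j)}$ with cumulative skip value $\lfloor t/2\rfloor$ --- so a processor owning a contiguous range of positions can act without any dependence on the others. Hence each round is $O(m/p)$ time, $\lg p$ rounds give $O(\frac{m}{p}\lg p)$, and since the only shared accesses are concurrent reads of path nodes and of hash keys, CREW PRAM suffices. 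The closing parallel scan that re-checks the skipped characters at the resulting node is one further $O(m/p)$-time step and does not change the bound.

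The step I expect to be the main obstacle is exactly this parallelization claim for phase (iii): replacing the inherently sequential ``advance one node at a time'' procedure of Lemma~\ref{lemma:interleaving-paths-time} by a work-optimal oblivious schedule. It rests on the closed form of Property~\ref{property:interleaving-of-non-equally-long-subsequences}, which makes the pairing at each merged position independent of the history so that processors need not communicate within a round; pinning down the path layout (aligned and directly addressable by cumulative skip value) and spelling out how a processor locates the boundaries of its assigned position range is where the real content lies. Correctness of the end result --- that the deepest node on $\pi_1^{(1)}$ corresponds to $Q$ --- is comparatively routine: it follows by iterating Lemma~\ref{lemma:interleave-suffix-tree-mapping} across the $\lg p$ layers, together with the ``remember the deepest successful probe'' rule needed when the query is longer than the shortest substring a mapped node represents, exactly as in Proposition~\ref{proposition:suffix_tree_short_splitting}.
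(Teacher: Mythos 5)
Your proposal matches the paper's own argument: space is bounded by summing $O(n)$ per layer (tree plus perfect-hash dictionary) over the $\lg p$ layers with the $O(m+p)$ scratch space absorbed, and time is bounded by the $O(m/p)$ independent navigations of $\tau^{(p)}$ followed by $\lg p$ deinterleaving rounds, each taking $O(m/p)$ time because the probes of Lemma~\ref{lemma:interleaving-paths-time} cost $O(1)$ via perfect hashing and the paths, aligned by cumulative skip value, can be split into consecutive chunks handled independently by the processors. Your extra detail on the oblivious pairing at each merged position is a slightly more explicit rendering of the paper's ``embarrassingly parallel'' claim, not a different route.
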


\section{Conclusion}

We have explored a parallel query in suffix tree based data structures where the number of processors $p \ll n$. We have presented two algorithms. The first one uses as a data structure a suffix trie. It splits the query string into $p$ consecutive subqueries, navigates the underlying suffix trie, and merges the intermediate nodes into the final node. This requires $O(m + p)$ work, $O(m/p + \lg p)$ time and $O(n^2)$ space in the worst case. The second algorithm extends the same principle to suffix trees where we correctly solve issues concerning the path compression. However, the subquery overlapping required to solve the path compression issues introduced dependencies between the navigated paths. In the worst case, no parallelism can be employed and an inherently sequential execution is performed. Finally we have presented a layered interleaved suffix tree data structure. Instead of processing the query in a consecutive subquery manner, we use interleaving. This approach requires $O(m \lg p)$ work, $O(\frac{m}{p} \lg p)$ time and $O(n \lg p)$ space in the worst case. To the best of our knowledge, the presented algorithms are the first parallel algorithms for pattern matching requiring the amount of work not related to $n$. The number of processors in previous solutions assumed $p$ is of order of $n$.

There exists an open question whether we reached the space-work lower bound for the pattern matching problem. We also haven't provided any parallel cache complexity analysis of our query algorithms, perhaps using the cache-oblivious string dictionary \cite{Brodal2006} or the string B-tree \cite{Ferragina1999} as a starting point. From the applied point of view, an interesting research question would be whether the presented algorithms improve cache performance in comparison to traditional sequential query, since each processor accesses $O(m/p + \lg p)$ nodes instead of $O(m)$ during the suffix tree navigation and there is a better chance the accessed nodes might have already been in the cache due to temporal locality.

\bibliographystyle{splncs03}
\bibliography{jekovec}

\begin{thebibliography}{10}
\providecommand{\url}[1]{\texttt{#1}}
\providecommand{\urlprefix}{URL }

\bibitem{Bender2004}
Bender, M.A., Farach-Colton, M.: {The Level Ancestor Problem simplified}.
  Theoretical Computer Science  321(1),  5--12 (Jun 2004),
  \url{http://www.sciencedirect.com/science/article/pii/S0304397504001173}

\bibitem{Bingmann2013-2}
Bingmann, T., Fischer, J., Osipov, V.: {Inducing Suffix and LCP Arrays in
  External Memory}. In: Proceedings of the 15th Meeting on Algorithm
  Engineering and Experiments, ALENEX 2013. pp. 88--102. Society for Industrial
  and Applied Mathematics (2013),
  \url{http://knowledgecenter.siam.org/0238-000015}

\bibitem{Boyer1977}
Boyer, R.S., Moore, J.S.: {A fast string searching algorithm}. Communications
  of the ACM  20(10),  762--772 (Oct 1977),
  \url{http://dl.acm.org/citation.cfm?id=359842.359859}

\bibitem{Brodal2006}
Brodal, G.S.l., Fagerberg, R.: {Cache-oblivious string dictionaries}. In:
  Proceedings of the 17th annual ACM-SIAM symposium on Discrete algorithm, SODA
  '06. pp. 581--590. ACM Press, New York, USA (2006),
  \url{http://dl.acm.org/citation.cfm?id=1109557.1109621}

\bibitem{Clark1996}
Clark, D.R., Munro, J.I.: {Efficient suffix trees on secondary storage}. In:
  Proceedings of the seventh annual ACM-SIAM symposium on Discrete algorithms,
  SODA '96. pp. 383--391. Society for Industrial and Applied Mathematics (Jan
  1996), \url{http://dl.acm.org/citation.cfm?id=313852.314087}

\bibitem{Czumaj1995}
Czumaj, A., Galil, Z., Gasieniec, L., Park, K., Plandowski, W.:
  {Work-time-optimal parallel algorithms for string problems}. In: Proceedings
  of the Twenty-Seventh Annual \{ACM\} Symposium on Theory of Computing, 29
  May-1 June 1995, Las Vegas, Nevada, \{USA\}. pp. 713--722 (1995),
  \url{http://doi.acm.org/10.1145/225058.225289}

\bibitem{DeLaBriandais1959}
{De La Briandais}, R.: {File searching using variable length keys}. In: ACM
  Western joint computer conference. pp. 295--298. ACM Press, New York, USA
  (1959), \url{http://dl.acm.org/citation.cfm?id=1457838.1457895}

\bibitem{Demaine2004}
Demaine, E.D., Iacono, J., Langerman, S.: {Worst-Case Optimal Tree Layout in
  External Memory} p.~10 (Oct 2004), \url{http://arxiv.org/abs/cs/0410048}

\bibitem{Farach-Colton2000}
Farach-Colton, M., Ferragina, P., Muthukrishnan, S.: {On the sorting-complexity
  of suffix tree construction}. Journal of the ACM  47(6),  987--1011 (Nov
  2000), \url{http://dl.acm.org/citation.cfm?id=355541.355547}

\bibitem{Ferragina1999}
Ferragina, P., Grossi, R.: {The string B-tree: a new data structure for string
  search in external memory and its applications}. Journal of the ACM (JACM)
  46(2),  236--280 (1999), \url{http://doi.acm.org/10.1145/301970.301973}

\bibitem{Fredkin1960}
Fredkin, E.: {Trie memory}. Communications of the ACM  3(9),  490--499 (1960),
  \url{http://dl.acm.org/citation.cfm?id=367390.367400}

\bibitem{Fredman1984}
Fredman, M.L., Koml\'{o}s, J., Szemer\'{e}di, E.: {Storing a Sparse Table with
  0(1) Worst Case Access Time}. Journal of the ACM  31(3),  538--544 (1984),
  \url{http://dl.acm.org/citation.cfm?id=828.1884}

\bibitem{Jekovec2013}
Jekovec, M.: {Theoretical aspects of ERa, the fastest practical suffix tree
  construction algorithm}. In: Middle-European Conference on Applied
  Theoretical Computer Science MATCOS'13. p.~4 (2013),
  \url{http://matcos.pint.upr.si/en/resources/files/program/jekovec.pdf}

\bibitem{Kaerkkaeinen2015}
K\"{a}rkk\"{a}inen, J., Kempa, D., Puglisi, S.J.: {Parallel External Memory
  Suffix Sorting}. In: Combinatorial Pattern Matching, vol. 9133, pp. 329--342.
  Springer International Publishing (2015),
  \url{http://dx.doi.org/10.1007/978-3-319-19929-0\_28}

\bibitem{Knuth1977}
Knuth, D., {Morris Jr.}, J., Pratt, V.: {Fast Pattern Matching in Strings}.
  SIAM Journal on Computing  6(2),  323--350 (1977),
  \url{http://dx.doi.org/10.1137/0206024}

\bibitem{Manber1990}
Manber, U., Myers, G.: {Suffix arrays: a new method for on-line string
  searches}. In: Proceedings of the first annual ACM-SIAM symposium on Discrete
  algorithms. pp. 319--327. Society for Industrial and Applied Mathematics (Jan
  1990), \url{http://dl.acm.org/citation.cfm?id=320176.320218}

\bibitem{Mansour2011}
Mansour, E., Allam, A., Skiadopoulos, S., Kalnis, P.: {ERA: efficient serial
  and parallel suffix tree construction for very long strings}. Proceedings of
  the VLDB Endowment  5(1),  49--60 (2011),
  \url{http://dl.acm.org/citation.cfm?id=2047485.2047490}

\bibitem{McCreight1976}
McCreight, E.M.: {A Space-Economical Suffix Tree Construction Algorithm}.
  Journal of the ACM  23(2),  262--272 (1976),
  \url{http://doi.acm.org/10.1145/321941.321946}

\bibitem{Morrison1968}
Morrison, D.R.: {PATRICIA - Practical Algorithm To Retrieve Information Coded
  in Alphanumeric}. Journal of the ACM  15(4),  514--534 (1968),
  \url{http://doi.acm.org/10.1145/321479.321481}

\bibitem{Navarro2007}
Navarro, G., M\"{a}kinen, V.: {Compressed full-text indexes}. ACM Computing
  Surveys  39(1), ~2 (2007),
  \url{http://dl.acm.org/citation.cfm?id=1216370.1216372}

\bibitem{Shun2014}
Shun, J., Blelloch, G.E.: {A simple parallel cartesian tree algorithm and its
  application to parallel suffix tree construction}. ACM Transactions on
  Parallel Computing  1(1),  1--20 (Oct 2014),
  \url{http://dl.acm.org/citation.cfm?id=2632163.2661653}

\bibitem{Vishkin1985}
Vishkin, U.: {Optimal parallel pattern matching in strings}. Information and
  Control  67(1-3),  91--113 (Oct 1985),
  \url{http://www.sciencedirect.com/science/article/pii/S0019995885800280}

\end{thebibliography}

\clearpage


\end{document}